\title{On a generalization of the Jensen-Shannon divergence}
\author{Frank Nielsen\\ Sony Computer Science Laboratories, Inc\\ Tokyo, Japan\\ E-mail:{\tt Frank.Nielsen@acm.org}}
\date{}
\begin{document}
\maketitle
  
\begin{abstract}
The Jensen-Shannon divergence is a renown bounded symmetrization of the Kullback-Leibler divergence which does not require probability densities to have matching supports.
In this paper, we introduce a vector-skew generalization of the scalar $\alpha$-Jensen-Bregman divergences and derive thereof the vector-skew $\alpha$-Jensen-Shannon divergences. We study the properties of these novel divergences and
show how to build parametric families of symmetric Jensen-Shannon-type divergences.
Finally, we report an iterative algorithm to numerically compute the Jensen-Shannon-type centroids for a set of probability densities belonging to a mixture family: This includes the case of the Jensen-Shannon centroid of a set of categorical distributions or normalized histograms. 
\end{abstract}
\noindent Keywords: Bregman divergence; Jensen-Bregman divergence; Jensen diversity; Jensen-Shannon divergence;
 Jensen-Shannon centroid; mixture family; information geometry; difference of convex (DC) programming.

\def\CS{\mathrm{CS}}
\def\calX{\mathcal{X}}
\def\calF{\mathcal{F}}
\def\eqdef{:=}
 \def\st{\ :\ }
\def\calE{\mathcal{E}}
\def\calM{\mathcal{M}}
\def\calX{\mathcal{X}}
\def\calP{\mathcal{P}}
\def\calB{\mathcal{B}}
\def\bbR{\mathbb{R}}
\def\tp{\tilde{p}}
\def\tq{\tilde{q}}

\def\kl{\mathrm{kl}}
\def\KL{\mathrm{KL}}

\def\dx{\mathrm{d}x}
\def\HH{\mathrm{HH}}
\def\lhs{\mathrm{lhs}}
\def\rhs{\mathrm{rhs}}
\def\dx{\mathrm{d}x}
\def\dy{\mathrm{d}y}
\def\dt{\mathrm{d}t}
\def\dmu{\mathrm{d}\mu}
\def\hcross{{h^\times}}

\def\bbR{\mathbb{R}}
\def\bbN{\mathbb{N}}
\def\Holder{\mathrm{Holder}}
\def\tp{\tilde{p}}
\def\tq{\tilde{q}}

\def\leftsup#1{{{}^{#1}}}

\def\supp{\mathrm{supp}}

\def\JB{\mathrm{JB}}
\def\JS{\mathrm{JS}}
\def\QED{\ensuremath{{\square}}}
\def\markatright#1{\leavevmode\unskip\nobreak\quad\hspace*{\fill}{#1}}

\def\Inner#1#2{{\left\langle #1,#2 \right\rangle}}
\def\inner#1#2{{\langle #1,#2 \rangle}}
\def\innerE#1#2{{\langle #1,#2 \rangle}_E}

\newtheorem{Fact}{Fact}
\newtheorem{Lemma}{Lemma}
\newtheorem{Definition}{Definition}
\newtheorem{Remark}{Remark}
\newtheorem{Property}{Property}
\newtheorem{Theorem}{Theorem}

\section{Introduction}

Let $(\calX,\calF,\mu)$ be a measure space~\cite{PM-2008} where $\calX$ denotes the sample space, $\calF$ the $\sigma$-algebra of measurable events, and $\mu$ a positive measure.
For example, the measure space defined by the Lebesgue measure $\mu_L$ with Borel $\sigma$-algebra $\calB(\bbR^d)$ for $\calX=\bbR^d$ or
 the measure space defined by the counting measure  $\mu_c$ with the power set 
$\sigma$-algebra $2^\calX$ on a finite alphabet $\calX$.
Denote by $L_1(\calX,\calF,\mu)$ the Lebesgue space of measurable functions,
$\calP_1$ the subspace of {\em positive} integrable functions $f$ such that $\int_\calX f(x)\dmu(x)=1$ and $f(x)>0$ for all $x\in\calX$, and 
$\overline{\calP}_1$ the subspace of {\em non-negative} integrable functions $f$ such that 
$\int_\calX f(x)\dmu(x)=1$ and $f(x)\geq 0$ for all $x\in\calX$.

The {\em Kullback-Leibler Divergence} (KLD) $\KL: \calP_1\times \calP_1\rightarrow [0,\infty]$ 
is an oriented statistical distance (commonly called the relative entropy in information theory~\cite{CT-2012}) defined between two densities $p$ and $q$ (i.e., the Radon-Nikodym densities of $\mu$-absolutely continuous probability measures $P$ and $Q$) by
\begin{equation}\label{eq:kldpm}
\KL(p:q) \eqdef \int p\log \frac{p}{q} \dmu.
\end{equation}
Although $\KL(p:q)\geq 0$ with equality iff. $p=q$  $\mu$-a. e. (Gibb's inequality~\cite{CT-2012}), the KLD may diverge to infinity depending on the underlying densities. 
Since the KLD is asymmetric, several symmetrizations~\cite{JS-2019} have been proposed in the literature including
the {\em Jeffreys divergence}~\cite{JeffreysCentroid-2013} (JD):
\begin{equation}
J(p,q) \eqdef \KL(p:q)+\KL(q:p) = \int (p-q)\log \frac{p}{q} \dmu = J(q,p),
\end{equation}
and the {\em Jensen-Shannon Divergence}~\cite{Lin-1991} (JSD): 
\begin{eqnarray}
\JS(p,q) &\eqdef& \frac{1}{2} \left( \KL\left(p:\frac{p+q}{2}\right) +  \KL\left(q:\frac{p+q}{2}\right) \right),\label{eq:jshi}\\
&=& \frac{1}{2}\int \left(p\log \frac{2p}{p+q} +  q\log \frac{2q}{p+q}\right)\dmu = \JS(q,p).
\end{eqnarray}
The Jensen-Shannon divergence can be interpreted as the {\em total KL divergence to the average distribution} $\frac{p+q}{2}$.
A nice feature of the Jensen-Shannon divergence is that this divergence can be applied
 to densities with {\em arbitrary} support (i.e., $p,q\in \overline{\calP}_1$ with the convention that $0\log 0=0$ and $\log\frac{0}{0}=0$), and moreover the JSD is {\em always} upper bounded by $\log 2$.
Let $\calX_p=\supp(p)$ and $\calX_q=\supp(q)$ denote the supports of the densities $p$ and $q$, respectively, where $\supp(p)\eqdef\{x\in\calX \st p(x)>0\}$. 
The JSD saturates to $\log 2$ whenever the supports $\calX_p$ and $\calX_p$ are disjoints.
The square root of the JSD is a metric~\cite{JSMetric-2003} satisfying the triangle inequality but the square root of the JD is not a metric (nor any positive power of the Jeffreys divergence, see~\cite{PowerFDiv-1991}).

For two positive but not necessarily normalized densities $\tilde{p}$ and $\tilde{q}$, we define the {\em extended Kullback-Leibler divergence} as follows:
\begin{eqnarray}
\KL^+(\tilde{p}:\tilde{q}) &\eqdef& \KL(\tilde{p}:\tilde{q})+\int \tilde{q}\dmu-\int \tilde{p} \dmu,\\
&=& \int \left( \tp\log\frac{\tp}{\tq}+\tq-\tp \right)\dmu.
\end{eqnarray}

The Jeffreys divergence  and the Jensen-Shannon divergence can both be extended to positive (unnormalized) densities without changing their formula expressions:
\begin{eqnarray}
J^+(\tilde{p},\tilde{q}) &\eqdef& \KL^+(\tilde{p}:\tilde{q})+\KL^+(\tilde{p}:\tilde{q}) = \int (\tilde{p}-\tilde{q})\log \frac{\tilde{p}}{\tilde{q}} \dmu = J(\tilde{p},\tilde{q}),\\
\JS^+(\tilde{p},\tilde{q}) &\eqdef& \frac{1}{2} \left( \KL^+\left(\tilde{p}:\frac{\tilde{p}+\tilde{q}}{2}\right) + 
 \KL^+\left(\tilde{q}:\frac{\tilde{p}+\tilde{q}}{2}\right) \right),\\
&=&\frac{1}{2} \left( \KL\left(\tilde{p}:\frac{\tilde{p}+\tilde{q}}{2}\right) + 
 \KL\left( \tilde{q} : \frac{\tilde{p}+\tilde{q}}{2}\right) \right) =\JS(\tilde{p},\tilde{q}).
\end{eqnarray}
However, the extended $\JS^+$ divergence is upper bounded by 
$(\frac{1}{2}\log 2)(\int (\tilde{p}+\tilde{q})\dmu)=\frac{1}{2}(\mu(p)+\mu(q))\log 2$ instead of $\log 2$ for normalized densities (i.e., when $\mu(p)+\mu(q)=2$).

Let $(pq)_\alpha(x)\eqdef (1-\alpha)p(x)+\alpha q(x)$ denote the statistical weighted mixture with component densities $p$ and $q$ for $\alpha\in [0,1]$.
The asymmetric $\alpha$-skew Jensen-Shannon divergence can be defined 
for a scalar parameter $\alpha\in(0,1)$ by considering the weighted mixture $(pq)_\alpha$ as follows:
\begin{eqnarray}
\JS^\alpha_a(p:q) & \eqdef & (1-\alpha) \KL(p:(pq)_\alpha) + \alpha    \KL(q:(pq)_\alpha),\\
 &=& (1-\alpha) \int p\log \frac{p}{(pq)_\alpha} \dmu + \alpha  \int q\log \frac{q}{(pq)_\alpha} \dmu.
\end{eqnarray}

Let us introduce the {\em $\alpha$-skew $K$-divergence}~\cite{skew-1999,Lin-1991} $K_\alpha(p:q)$  by:
\begin{equation}
K_\alpha\left(p:q\right) \eqdef \KL\left(p:(1-\alpha)p+\alpha q\right) = \KL\left(p:(pq)_\alpha\right).
\end{equation}

Then both the Jensen-Shannon divergence and the Jeffreys divergence can then be rewritten~\cite{SymJensen-2010} using $K_\alpha$ as follows:
\begin{eqnarray}
\JS\left(p,q\right) &=& \frac{1}{2}\left(K_{\frac{1}{2}}\left(p:q\right) + K_{\frac{1}{2}}\left(q:p\right) \right),\\
J\left(p,q\right) &=&  K_1(p:q)+K_1(q:p),
\end{eqnarray}
since $(pq)_1=q$, $\KL(p:q)=K_1(p:q)$ and $(pq)_{\frac{1}{2}}=(qp)_{\frac{1}{2}}$.

We can thus define the {\em symmetric} {\em $\alpha$-skew Jensen-Shannon divergence}~\cite{SymJensen-2010}   for $\alpha\in(0,1)$ as follows:
\begin{eqnarray}
\JS^\alpha(p,q) &\eqdef & \frac{1}{2} K_\alpha(p:q) + \frac{1}{2} K_{\alpha}(q:p) = \JS^\alpha(q,p).
\end{eqnarray}
The ordinary Jensen-Shannon divergence is recovered for $\alpha=\frac{1}{2}$.

In general, skewing divergences (e.g., using the divergence $K_\alpha$ instead of the KLD) has been shown experimentally to perform better in applications like in some natural language processing (NLP) tasks~\cite{Lee-2001}.

The  {\it $\alpha$-Jensen-Shannon divergences} are  Csisz\'ar $f$-divergences~\cite{Csiszar-1967,AliSilvey-1966,fdiv-2018}.
A $f$-divergence is defined for a  convex function $f$, strictly convex at $1$ and satisfying $f(1)=0$ as: 

\begin{equation}
I_f(p:q) = \int q(x) f\left( \frac{p(x)}{q(x)} \right) \dx \geq f(1)=0.
\end{equation}
We can always symmetrize $f$-divergences by taking the {\it conjugate} convex function $f^*(x)=xf(\frac{1}{x})$ (related to the perspective function):
 $I_{f+f^*}(p,q)$ is a symmetric divergence. The $f$-divergences are convex statistical distances which are
provably the only separable invariant divergences in information geometry~\cite{IG-2016}, except for binary alphabets $\calX$ (see~\cite{CuriousCase-2014}).

Jeffreys divergence is a $f$-divergence for the generator $f(x)=(x-1)\log x$,
 and the $\alpha$-Jensen-Shannon divergences are $f$-divergences for the generator family 
$f_\alpha(x)=-\log ((1-\alpha)+\alpha x)-x\log ((1-\alpha)+\frac{\alpha}{x})$.
The $f$-divergences are upper bounded by $f(0)+f^*(0)$.
Thus the $f$-divergences are finite when $f(0)+f^*(0)<\infty$.

The main contributions of this paper are summarized as follows:

\begin{itemize}

\item First, we generalize the Jensen-Bregman divergence by skewing a weighted separable Jensen-Bregman divergence with a $k$-dimensional {\em vector} $\alpha\in [0,1]^k$  in \S\ref{sec:eJS}.
This yields a generalization of the symmetric skew $\alpha$-Jensen-Shannon divergences to a vector-skew parameter.
This extension
retains the key properties to be upper bounded and to apply to densities with potentially different support. 
The proposed generalization allows one to grasp a better understanding of the ``mechanism'' of the Jensen-Shannon divergence itself too.
We also show how to obtain directly the weighted vector-skew Jensen-Shannon divergence from the decomposition of the KLD as the difference of the cross-entropy minus the entropy (i.e., KLD as the relative entropy).

\item Second, we show how to build families of symmetric Jensen-Shannon-type divergences which can be controlled by a vector of parameters in \S\ref{sec:symJS}, generalizing the work of~\cite{SymJensen-2010} from scalar skewing to vector skewing.
This may prove useful in applications by providing additional tuning parameters (which can be set, for example, by using cross-validation techniques).

\item Third, we consider the calculation of the {\em Jensen-Shannon centroids} in \S\ref{sec:JScentroid} for densities belonging to mixture families.
Mixture families include the family of categorical distributions and the family of statistical mixtures sharing the same prescribed components.
Mixture families are well-studied manifolds in information geometry~\cite{IG-2016}.
We show how to compute the Jensen-Shannon centroid  using a concave-convex numerical iterative optimization procedure~\cite{Yuille-2002}.
Experimental results compare graphically the Jeffreys centroid with the Jensen-Shannon centroid for grey-valued image histograms.

\end{itemize}

\section{Extending the Jensen-Shannon divergence}\label{sec:eJS}

\subsection{Vector-skew Jensen-Bregman divergences and Jensen diversities}

Recall our notational shortcut: $(ab)_\alpha \eqdef (1-\alpha)a+\alpha b$. 
For a $k$-dimensional vector $\alpha\in [0,1]^k$, a weight vector $w$ belonging to the $(k-1)$-dimensional open simplex $\Delta_{k}$, 
and a scalar $\gamma\in (0,1)$, let us define the following vector {\em skew $\alpha$-Jensen-Bregman divergence} ($\alpha$-JBD) following~\cite{JBD-2011}:
\begin{equation}
\JB_{F}^{\alpha,\gamma,w}(\theta_1:\theta_2) \eqdef 
\sum_{i=1}^k w_i B_F\left((\theta_1\theta_2)_{\alpha_i}:(\theta_1\theta_2)_{\gamma}\right) \geq 0,
\end{equation}
where $B_F$ is the {\em Bregman divergence}~\cite{Bregman-2005} induced by a strictly convex and smooth generator $F$:
\begin{equation}
B_{F}(\theta_1:\theta_2) \eqdef F(\theta_1)-F(\theta_2)-\inner{\theta_1-\theta_2}{\nabla F(\theta_2)},
\end{equation}
  with $\inner{\cdot}{\cdot}$ denoting the Euclidean inner product $\inner{x}{y}=x^\top y$ (dot product).
Expanding the Bregman divergence formulas in the expression of the $\alpha$-JBD, and using the fact 
that  
\begin{equation}
(\theta_1\theta_2)_{\alpha_i}-(\theta_1\theta_2)_{\gamma}=(\gamma-\alpha_i)(\theta_1-\theta_2),
\end{equation}
we get the following expression:
\begin{equation}
\JB_{F}^{\alpha,\gamma,w}(\theta_1:\theta_2) = \left( \sum_{i=1}^k w_i F\left((\theta_1\theta_2)_{\alpha_i}\right) \right)
-F\left((\theta_1\theta_2)_{\gamma}\right) 
-\Inner{ \sum_{i=1}^k w_i (\gamma-\alpha_i)(\theta_1-\theta_2) }{\nabla F((\theta_1\theta_2)_{\gamma})}. \label{eq:jbd}
\end{equation}

The inner product term of Eq.~\ref{eq:jbd} vanishes when 
\begin{equation}
\gamma=\sum_{i=1}^k w_i \alpha_i := \bar\alpha.
\end{equation}

Thus when $\gamma=\bar\alpha$ (assuming at least two distinct components in $\alpha$ so that $\gamma\in(0,1)$), we get the simplified formula for the vector-skew $\alpha$-JBD:
\begin{equation}
\boxed{\JB_{F}^{\alpha,w}(\theta_1:\theta_2) =
\left( \sum_{i=1}^k w_i F\left((\theta_1\theta_2)_{\alpha_i}\right) \right)  - F\left((\theta_1\theta_2)_{\bar\alpha}\right).}
\end{equation}
This vector-skew Jensen-Bregman divergence is always finite
and amounts to a  {\em Jensen diversity}~\cite{SymB-2009} $J_F$ induced by Jensen's inequality gap:
\begin{equation}
\JB_{F}^{\alpha,w}(\theta_1:\theta_2) = J_F((\theta_1\theta_2)_{\alpha_1},\ldots, (\theta_1\theta_2)_{\alpha_k};w_1,\ldots, w_k) 
\eqdef \sum_{i=1}^k w_i F\left((\theta_1\theta_2)_{\alpha_i}\right) - F\left((\theta_1\theta_2)_{\bar\alpha}\right)\geq 0.
\end{equation}

The Jensen diversity is a quantity which arises naturally as a generalization of the cluster variance (i.e., Bregman information) when clustering with Bregman divergences, see~\cite{Bregman-2005,SymB-2009}. 
In general, a $k$-point measure is called a diversity measure (for $k>2$)  while a distance/divergence is a $2$-point measure.

Conversely, in 1D, we may start from Jensen's inequality for a strictly convex function $F$:
\begin{equation}
\sum_{i=1}^k w_i F(\theta_i)\geq F\left(\sum_{i=1}^k w_i\theta_i\right).
\end{equation}
Let $[k]\eqdef\{1,\ldots, k\}$, $\theta_m=\min_{i\in[k]}\{\theta_i\}_i$ and $\theta_M=\max_{i\in [k]}\{\theta_i\}_i>\theta_m$ (assuming at least two distinct values).
We have the barycenter $\bar{\theta}=\sum_i w_i\theta_i=:(\theta_m\theta_M)_\gamma$ which can be interpreted as the linear interpolation of the extremal values for some $\gamma\in (0,1)$.
Let us write $\theta_i=(\theta_m\theta_M)_{\alpha_i}$ for $i\in [k]$ and proper values of the $\alpha_i$'s. 
Then it comes that
\begin{eqnarray}
\bar{\theta} &=& \sum_i w_i\theta_i,\\
&=&\sum_i w_i (\theta_m\theta_M)_{\alpha_i},\\
&=& \sum_i w_i ((1-\alpha_i)\theta_m+\alpha_i\theta_M),\\
&=& \left(1-\sum_i w_i\alpha_i\right) \theta_m+\sum_i\alpha_iw_i\theta_M,\\
&=&(\theta_m\theta_M)_{\sum_i w_i\alpha_i}=(\theta_m\theta_M)_{\gamma},
\end{eqnarray}
so that $\gamma=\sum_i w_i\alpha_i=\bar\alpha$.

\subsection{Vector-skew Jensen-Shannon divergences}

Let  $f(x)=x\log x-x$ be a strictly smooth convex function on $(0,\infty)$.
Then the Bregman divergence induced by this univariate generator is
\begin{equation}
B_f(p:q)=p\log\frac{p}{q}+q-p=\kl_+(p:q),
\end{equation}
 the extended {\em scalar extended Kullback-Leibler divergence}.


We extend the scalar-skew Jensen-Shannon divergence as follows: 
$\JS^{\alpha,w}(p:q) \eqdef \JB_{-h}^{\alpha,\bar\alpha,w}(p:q)$ for $h$ the Shannon's entropy~\cite{CT-2012} (a strictly concave function~\cite{CT-2012}).

\begin{Definition}[Weighted vector-skew $(\alpha,w)$-Jensen-Shannon divergence]
For a vector $\alpha\in [0,1]^k$ and a unit positive weight vector $w\in\Delta_k$, the $(\alpha,w)$-Jensen-Shannon divergence between two densities   $p,q \in  \bar\calP_1$
is defined by:
\begin{equation*}
\boxed{\JS^{\alpha,w}(p:q) \eqdef  \sum_{i=1}^k w_i\KL((pq)_{\alpha_i}:(pq)_{\bar\alpha}) = h\left((pq)_{\bar\alpha}\right) -\sum_{i=1}^k w_i h\left((pq)_{\alpha_i}\right),}
\end{equation*}
 with $\bar\alpha=\sum_{i=1}^k w_i \alpha_i$, where
$h(p)=-\int p(x)\log p(x)\dmu(x)$ denotes the Shannon entropy~\cite{CT-2012} (i.e., $-h$ is strictly convex).
\end{Definition}

This definition generalizes  the ordinary JSD;
We recover the ordinary Jensen-Shannon divergence when
$k=2$, $\alpha_1=0$, $\alpha_2=1$, $w_1=w_2=\frac{1}{2}$ with $\bar\alpha=\frac{1}{2}$:
$\JS(p,q)=\JS^{(0,1),(\frac{1}{2},\frac{1}{2})}(p:q)$.

Let $\KL_{\alpha,\beta}(p:q)\eqdef \KL((pq)_\alpha:(pq)_\beta)$.
Then we have $\KL_{\alpha,\beta}(q:p)=\KL_{1-\alpha,1-\beta}(p:q)$.
Using this $(\alpha,\beta)$-KLD, we have the following identity:
\begin{eqnarray}
\JS^{\alpha,w}(p:q)  &=&  \sum_{i=1}^k w_i\KL_{\alpha_i,\bar\alpha}(p:q),\\
&=& \sum_{i=1}^k w_i\KL_{1-\alpha_i,1-\bar\alpha}(q:p) = \JS^{1_k-\alpha,w}(q:p),  
\end{eqnarray}
since $\sum_{i=1}^k w_i (1-\alpha_i)=\overline{1_k-\alpha}=1-\bar\alpha$, where $1_k=(1,\ldots, 1)$ is a $k$-dimensional vector of ones.

Next, we show that $\KL_{\alpha,\beta}$ (and $\JS^{\alpha,w}$) are separable convex divergences:

\begin{Theorem}[Separable convexity]
The divergence $\KL_{\alpha,\beta}(p:q)$ is strictly separable convex for $\alpha\not=\beta$ and $x\in\calX_p\cap\calX_q$.
\end{Theorem}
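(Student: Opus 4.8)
\emph{Proof plan.} The plan is to exhibit $\KL_{\alpha,\beta}$ as a $\mu$-integral of a fixed bivariate kernel evaluated at the pointwise density values, and then to establish convexity — and strict convexity in each coordinate on the relevant domain — of that kernel. Write
\[
\KL_{\alpha,\beta}(p:q)=\int_\calX g_{\alpha,\beta}\bigl(p(x),q(x)\bigr)\,\dmu(x),\qquad
g_{\alpha,\beta}(a,b)\eqdef (ab)_\alpha\,\log\frac{(ab)_\alpha}{(ab)_\beta},
\]
with $(ab)_\alpha=(1-\alpha)a+\alpha b$. This representation is exactly what \emph{separability} means: the divergence is an integral (a sum, in the discrete/histogram case) of one and the same function of the pair $(p(x),q(x))$. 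It therefore suffices to study $g_{\alpha,\beta}$ on the open positive quadrant $a,b>0$, which corresponds to $x\in\calX_p\cap\calX_q$; note that for $\alpha,\beta\in[0,1]$ and $a,b>0$ both $u\eqdef(ab)_\alpha$ and $v\eqdef(ab)_\beta$ lie in $[\min(a,b),\max(a,b)]$, hence are $>0$, so $g_{\alpha,\beta}$ is smooth there.

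\emph{Convexity.} Let $\phi(t)=t\log t$, which is smooth and strictly convex on $(0,\infty)$. Then $g_{\alpha,\beta}(a,b)=v\,\phi(u/v)$ is the perspective transform of $\phi$ evaluated at $(u,v)$, and the perspective of a convex function is jointly convex on $\{v>0\}$. Since $(a,b)\mapsto(u,v)$ is the linear map with matrix $M=\begin{pmatrix}1-\alpha&\alpha\\1-\beta&\beta\end{pmatrix}$, of determinant $\beta-\alpha\neq0$ when $\alpha\neq\beta$, and composition with a linear map preserves convexity, $g_{\alpha,\beta}$ is jointly convex.

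\emph{Strictness.} Joint strict convexity is impossible: $g_{\alpha,\beta}$ is positively $1$-homogeneous, so by Euler's identity its Hessian annihilates the radial direction $(a,b)$ — which is precisely why the statement is phrased on $\calX_p\cap\calX_q$. Differentiating twice in the first argument (with $b$ fixed) and collecting terms gives
\[
\frac{\partial^2 g_{\alpha,\beta}}{\partial a^2}=\frac{1}{u}\Bigl((1-\alpha)-(1-\beta)\tfrac{u}{v}\Bigr)^{2},
\]
and the bracket vanishes only when $(1-\alpha)v=(1-\beta)u$, i.e.\ when $b(\beta-\alpha)=0$; hence $\partial_a^2 g_{\alpha,\beta}>0$ for $\alpha\neq\beta$ and $b>0$. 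By the symmetry $g_{\alpha,\beta}(a,b)=g_{1-\alpha,1-\beta}(b,a)$ the same computation yields $\partial_b^2 g_{\alpha,\beta}>0$ for $\alpha\neq\beta$ and $a>0$. Equivalently, the $2\times2$ PSD Hessian has one-dimensional kernel spanned by $(a,b)$, hence is positive definite in every non-radial direction — in particular along the two coordinate axes whenever $a,b>0$. Integrating over $x$, $\KL_{\alpha,\beta}(p:q)=\int g_{\alpha,\beta}(p,q)\,\dmu$ is a separable convex divergence, strictly so in each coordinate on $\calX_p\cap\calX_q$ when $\alpha\neq\beta$; the same then holds for the nonnegative convex combination $\JS^{\alpha,w}(p:q)=\sum_i w_i\,\KL_{\alpha_i,\bar\alpha}(p:q)$ as soon as $\alpha$ has two distinct components (so that $\alpha_i\neq\bar\alpha$ for some $i$).

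\emph{Main obstacle.} The only subtle point is the exact force of ``strictly'': $1$-homogeneity makes $g_{\alpha,\beta}$ flat along rays through the origin, so the claim must be understood coordinate-wise (equivalently, as positive-definiteness of the Hessian transverse to the radial direction), and this is exactly what the hypothesis $x\in\calX_p\cap\calX_q$ is for. The rest is the routine, if slightly tedious, verification that $\partial_a^2 g_{\alpha,\beta}$ collapses to the displayed perfect square.
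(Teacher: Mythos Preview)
Your proof is correct and follows essentially the same route as the paper: compute the second partial derivative of the integrand in the first argument, show it is strictly positive for $\alpha\neq\beta$ and both coordinates positive, then invoke the symmetry $g_{\alpha,\beta}(a,b)=g_{1-\alpha,1-\beta}(b,a)$ for the other argument. Your expression $\tfrac{1}{u}\bigl((1-\alpha)-(1-\beta)\tfrac{u}{v}\bigr)^{2}$ expands (via $(1-\alpha)v-(1-\beta)u=b(\beta-\alpha)$) to exactly the paper's $\dfrac{(\beta-\alpha)^{2}\,b^{2}}{(ab)_{\alpha}\,(ab)_{\beta}^{2}}$; the perspective-transform remark and the $1$-homogeneity discussion are correct extras that the paper does not include.
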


\begin{proof}
Let us calculate the second partial derivative of $\KL_{\alpha,\beta}(x:y)$ with respect to $x$, and show it is strictly positive:
\begin{equation}
\frac{\partial^2}{\partial x^2}\KL_{\alpha,\beta}(x:y) = \frac{(\beta-\alpha)^2 y^2}{(xy)_\alpha (xy)_\beta^2} >0,
\end{equation}
for $x,y>0$.
Thus $\KL_{\alpha,\beta}$ is strictly convex on the left argument. 
Similarly, since $\KL_{\alpha,\beta}(y:x)=\KL_{1-\alpha,1-\beta}(x:y)$, we deduce that $\KL_{\alpha,\beta}$ is strictly convex on the right argument.
Therefore the divergence $\KL_{\alpha,\beta}$ is separable convex.
\end{proof}

It follows that the divergence $\JS^{\alpha,w}(p:q)$ is strictly separable convex  since it is a convex combinations of 
weighted $\KL_{\alpha_i,\bar\alpha}$ divergences.

Another way to derive the vector-skew JSD is to decompose the KLD as the difference of the cross-entropy $\hcross$ minus the entropy $h$ (i.e., KLD is also called the relative entropy):
\begin{equation}
\KL(p:q) = \hcross(p:q)-h(p),
\end{equation}
where $\hcross(p:q)\eqdef -\int p\log q\dmu$ and $h(p)\eqdef \hcross(p:p)$ (self cross-entropy).
Since $\alpha_1\hcross(p_1:q)+\alpha_2\hcross(p_2:q)=\hcross(\alpha_1p_1+\alpha_2p_2:q)$ (for $\alpha_2=1-\alpha_1$), it follows that
\begin{eqnarray}
\JS^{\alpha,w}(p:q) &\eqdef&  \sum_{i=1}^k w_i \KL((pq)_{\alpha_i}:(pq)_{\gamma}),\\
&=& \sum_{i=1}^k w_i \left( \hcross((pq)_{\alpha_i}:(pq)_{\gamma})-h((pq)_{\alpha_i}) \right),\\
&=&  \hcross\left(\sum_{i=1}^k w_i  (pq)_{\alpha_i}:(pq)_{\gamma}\right) - \sum_{i=1}^k w_i h\left((pq)_{\alpha_i}\right).
\end{eqnarray}

Here, the ``trick'' is to choose $\gamma=\bar\alpha$ in order to ``convert'' the cross-entropy into an entropy:
$\hcross(\sum_{i=1}^k w_i  (pq)_{\alpha_i}:(pq)_{\gamma})=h((pq)_{\bar\alpha})$ when $\gamma=\bar\alpha$.
Then we end up with
\begin{equation}
\boxed{\JS^{\alpha,w}(p:q) =  h\left((pq)_{\bar\alpha}\right) - \sum_{i=1}^k w_i h\left((pq)_{\alpha_i}\right).}
\end{equation}

Moreover, if we consider the cross-entropy/entropy extended to positive densities $\tp$ and $\tq$:
\begin{equation}
h^\times_+(\tp:\tq)= -\int (\tp\log \tq + \tq)\dmu,\quad h_+(\tp)=h^\times_+(\tp:\tp)=-\int (\tp\log \tp + \tp)\dmu,
\end{equation}
we get:
\begin{equation}
\JS^{\alpha,w}_+(\tp:\tq) = \sum_{i=1}^k w_i \KL_+((\tp\tq)_{\alpha_i}:(\tp\tq)_{\gamma})=  h_+((\tp\tq)_{\bar\alpha}) - \sum_{i=1}^k w_i h_+((\tp\tq)_{\alpha_i}).
\end{equation}

Next, we shall prove that our generalization of the skew Jensen-Shannon divergence to vector-skewing is always bounded.
We first start by a lemma bounding the KLD between two mixtures sharing the same components:

\begin{Lemma}[KLD between two $w$-mixtures]
For $\alpha\in [0,1]$ and $\beta\in(0,1)$, we have: 
\begin{equation*}
\KL_{\alpha,\beta}(p:q)= \KL\left((pq)_\alpha:(pq)_\beta\right)\leq \log\frac{1}{\beta(1-\beta)}.
\end{equation*}
\end{Lemma}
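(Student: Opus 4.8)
The plan is to reduce the statement to a pointwise bound on the likelihood ratio of the two mixtures and then integrate. First I would write the two mixtures explicitly as $(pq)_\alpha=(1-\alpha)p+\alpha q$ and $(pq)_\beta=(1-\beta)p+\beta q$, and observe that, because $\beta\in(0,1)$ and all densities are non-negative, both summands of the denominator are dominated by it: $(1-\beta)p\le(pq)_\beta$ and $\beta q\le(pq)_\beta$ pointwise. Splitting the numerator accordingly then yields, at every point where $(pq)_\beta>0$,
\[
\frac{(pq)_\alpha}{(pq)_\beta}=\frac{(1-\alpha)p}{(pq)_\beta}+\frac{\alpha q}{(pq)_\beta}\le\frac{1-\alpha}{1-\beta}+\frac{\alpha}{\beta}.
\]

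The second step is an elementary estimate of this constant: $\frac{1-\alpha}{1-\beta}+\frac{\alpha}{\beta}=\frac{\alpha+\beta-2\alpha\beta}{\beta(1-\beta)}$, and since $\alpha\mapsto\alpha+\beta-2\alpha\beta$ is affine on $[0,1]$ it is maximized at an endpoint, so $\alpha+\beta-2\alpha\beta\le\max(\beta,1-\beta)\le 1$; hence the ratio is at most $\frac{1}{\beta(1-\beta)}$ pointwise, and therefore so is the logarithm of it on $\supp((pq)_\alpha)$. (I would remark that the sharper bound $\frac{1}{\min(\beta,1-\beta)}$ actually drops out, but $\frac{1}{\beta(1-\beta)}$ is all that is claimed.)

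For the third step I would integrate against $(pq)_\alpha\ge 0$:
\[
\KL_{\alpha,\beta}(p:q)=\int(pq)_\alpha\log\frac{(pq)_\alpha}{(pq)_\beta}\,\dmu\le\log\frac{1}{\beta(1-\beta)}\int(pq)_\alpha\,\dmu=\log\frac{1}{\beta(1-\beta)},
\]
using $\int(pq)_\alpha\,\dmu=(1-\alpha)\int p\,\dmu+\alpha\int q\,\dmu=1$. The only genuinely delicate point — and the part I would be most careful about — is the bookkeeping of supports together with the conventions $0\log 0=0$ and $\log\frac{0}{0}=0$: where $(pq)_\alpha=0$ the integrand vanishes, and where $(pq)_\beta=0$ with $\beta\in(0,1)$ one must have $p=q=0$, forcing $(pq)_\alpha=0$ as well, so such points never contribute. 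Beyond this there is no real obstacle.
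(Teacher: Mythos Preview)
Your argument is correct and complete. Both your proof and the paper's rest on the same underlying inequalities $p\le(pq)_\beta/(1-\beta)$ and $q\le(pq)_\beta/\beta$, but you combine them differently. The paper partitions $\calX$ into the dominance regions $R_p=\{q\le p\}$ and $R_q=\{q>p\}$, bounds the numerator $(pq)_\alpha$ by $p$ on $R_p$ and by $q$ on $R_q$, and then bounds the ratio by $1/(1-\beta)$ and $1/\beta$ on the two pieces; summing the two contributions (each with mass at most $1$) yields $-\log(1-\beta)-\log\beta$. You instead bound the ratio \emph{uniformly} on the whole space by the single constant $\frac{1-\alpha}{1-\beta}+\frac{\alpha}{\beta}$ via a convex combination, and then show this constant is at most $\frac{1}{\beta(1-\beta)}$. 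Your route is slightly cleaner in that it avoids the partition and, as you observe, already delivers the sharper bound $\log\frac{1}{\min(\beta,1-\beta)}$; the paper's partition argument would give the same refinement if one used that the two region masses sum to $1$ rather than bounding each by $1$ separately.
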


\begin{proof}
Let us form a partition of the sample space $\calX$ into two dominance regions:
\begin{itemize} 
\item $R_p\eqdef\{x\in\calX \st q(x)\leq p(x)\}$, 
and 
\item $R_q\eqdef\{x\in\calX \st q(x)>p(x)\}$.
\end{itemize}

We have $(pq)_\alpha(x)=(1-\alpha)p(x)+\alpha q(x)\leq p(x)$ for $x\in R_p$ 
and  $(pq)_\alpha(x)\leq q(x)$ for $x\in R_q$.
It follows that
$$
\KL\left((pq)_\alpha:(pq)_\beta\right) \leq \int_{R_p} (pq)_\alpha(x)\log\frac{p(x)}{(1-\beta)p(x)} \dmu(x) + 
\int_{R_q} (pq)_\alpha(x)\log\frac{q(x)}{\beta q(x)} \dmu(x).
$$
That is, $\KL((pq)_\alpha:(pq)_\beta)\leq -\log(1-\beta)-\log\beta=\log\frac{1}{\beta(1-\beta)}$.
Notice that we allow $\alpha\in\{0,1\}$ but not $\beta$ to take the extreme values (i.e., $\beta\in(0,1)$).
\end{proof}

In fact, it is known that for both $\alpha,\beta\in(0,1)$, 
$\KL\left((pq)_\alpha:(pq)_\beta\right)$ amount to compute a Bregman divergence for the Shannon negentropy generator
 since $\{(pq)_\gamma \st \gamma\in (0,1)\}$ defines a {\em mixture family}~\cite{wmixture-2018} of order $1$ in information geometry.
Hence, it is always finite as Bregman divergences are always finite (but not necessarily bounded).

By using the   fact that 
\begin{equation}
\JS^{\alpha,w}(p:q)= \sum_{i=1}^k w_i \KL\left((\theta_1\theta_2)_{\alpha_i}:(\theta_1\theta_2)_{\bar\alpha}\right),
\end{equation} 
we conclude that the vector-skew Jensen-Shannon divergence is upper bounded:

\begin{Lemma}[Bounded $(w,\alpha)$-Jensen-Shannon divergence]
$\JS^{\alpha,w}$ is bounded by $\log\frac{1}{\bar\alpha(1-\bar\alpha)}$ where $\bar\alpha= \sum_{i=1}^k w_i\alpha_i \in (0,1)$.
\end{Lemma}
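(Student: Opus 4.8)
The plan is to read the bound off directly from the decomposition
\[
\JS^{\alpha,w}(p:q)= \sum_{i=1}^k w_i \KL\left((pq)_{\alpha_i}:(pq)_{\bar\alpha}\right)
\]
established above, applying the previous Lemma to each summand. First I would observe that under the standing hypothesis $\bar\alpha=\sum_{i=1}^k w_i\alpha_i\in(0,1)$, every term is of the form $\KL_{\alpha,\beta}(p:q)$ with left skew parameter $\alpha=\alpha_i\in[0,1]$ and right skew parameter $\beta=\bar\alpha\in(0,1)$, which is exactly the configuration covered by the Lemma on the KLD between two $w$-mixtures (recall that it allows the extremal values $\{0,1\}$ for the left parameter but forbids them for the right one). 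Hence, for every $i\in[k]$,
\[
\KL\left((pq)_{\alpha_i}:(pq)_{\bar\alpha}\right)\leq \log\frac{1}{\bar\alpha(1-\bar\alpha)},
\]
and, crucially, the right-hand side does not depend on $i$.

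Then, since $w\in\Delta_k$ is a probability vector ($w_i\geq 0$, $\sum_i w_i=1$), I would take the corresponding convex combination of these $k$ inequalities:
\[
\JS^{\alpha,w}(p:q)=\sum_{i=1}^k w_i \KL\left((pq)_{\alpha_i}:(pq)_{\bar\alpha}\right)\leq \left(\sum_{i=1}^k w_i\right)\log\frac{1}{\bar\alpha(1-\bar\alpha)}=\log\frac{1}{\bar\alpha(1-\bar\alpha)},
\]
which is the claimed bound. The extension to unnormalized densities $\tp,\tq$ would follow the same template, bounding each $\KL_+$ summand and averaging.

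There is essentially no obstacle here: all the analytic work already sits in the preceding Lemma, and what remains is a weighted average. The only point worth flagging is that the finiteness of the bound hinges on $\bar\alpha$ lying strictly inside $(0,1)$ — equivalently, since the weights $w_i$ are positive, on the $\alpha_i$ being neither all $0$ nor all $1$ — which is precisely the hypothesis in the statement, and which is also what makes $\gamma=\bar\alpha\in(0,1)$ and hence validates the Jensen-Bregman / cross-entropy derivation of $\JS^{\alpha,w}$ used above. I would also note that this generic estimate need not be sharp: specialising to $k=2$, $\alpha=(0,1)$, $w=(\tfrac12,\tfrac12)$ gives $\bar\alpha=\tfrac12$ and the bound $\log 4$, which is weaker than the well-known tight bound $\log 2$ for the ordinary Jensen-Shannon divergence.
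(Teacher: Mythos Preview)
Your argument is correct and matches the paper's own proof essentially line for line: decompose $\JS^{\alpha,w}$ as the $w$-weighted sum of $\KL((pq)_{\alpha_i}:(pq)_{\bar\alpha})$, invoke the preceding Lemma with $\beta=\bar\alpha\in(0,1)$ to bound each summand by $\log\frac{1}{\bar\alpha(1-\bar\alpha)}$, and average. Your side remarks on the role of the hypothesis $\bar\alpha\in(0,1)$ and on the non-sharpness at $\bar\alpha=\tfrac12$ are accurate and go slightly beyond what the paper records.
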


\begin{proof}
We have $\JS^{\alpha,w}(p:q)=\sum_i w_i \KL\left((pq)_{\alpha_i}:(pq)_{\bar\alpha}\right)$.
Since
$0\leq \KL\left((pq)_{\alpha_i}:(pq)_{\bar\alpha}\right) \leq \log\frac{1}{\bar\alpha(1-\bar\alpha)}$, it follows that
we have
$$
0\leq \JS^{\alpha,w}(p:q) \leq \log\frac{1}{\bar\alpha(1-\bar\alpha)}.
$$
\end{proof}

The vector-skew Jensen-Shannon divergence is symmetric if and only if for each index $i\in [k]$ there exists a matching index $\sigma(i)$
such that $\alpha_{\sigma(i)}=1-\alpha_i$ and $w_{\sigma(i)}=w_i$.

For example, we may define the {\em symmetric scalar $\alpha$-skew Jensen-Shannon divergence} as
\begin{eqnarray}
 \JS^\alpha_s(p,q)&=& \frac{1}{2} \KL((pq)_\alpha:(pq)_{\frac{1}{2}}) + \frac{1}{2} \KL((pq)_{1-\alpha}:(pq)_{\frac{1}{2}}),\\
&=& \frac{1}{2} \int (pq)_\alpha\log \frac{(pq)_\alpha}{(pq)_{\frac{1}{2}}} \dmu + \frac{1}{2}   \int (pq)_{1-\alpha}\log \frac{(pq)_{1-\alpha}}{(pq)_{\frac{1}{2}}} \dmu,\\
&=&  \frac{1}{2} \int (qp)_{1-\alpha}\log \frac{(qp)_{1-\alpha}}{(qp)_{\frac{1}{2}}} \dmu +
+ \frac{1}{2}   \int (qp)_{\alpha}\log \frac{(qp)_{\alpha}}{(qp)_{\frac{1}{2}}} \dmu,\\
&=& h((pq)_{\frac{1}{2}}) - \frac{h((pq)_\alpha)+h((pq)_{1-\alpha})}{2},\\
&=:&  \JS^\alpha_s(q,p),
\end{eqnarray}
since it holds that $(ab)_c=(ba)_{1-c}$ for any $a,b,c\in\mathbb{R}$.
Note that $ \JS^\alpha_s(p,q)\not= \JS^\alpha(p,q)$.

\begin{Remark}
We can always symmetrize a vector-skew Jensen-Shannon divergence by doubling the dimension of the skewing vector.
Let $\alpha=(\alpha_1,\dots,\alpha_k)$ and $w$ be the vector parameters of an asymmetric vector-skew JSD, 
and consider $\alpha'=(1-\alpha_1,\ldots, 1-\alpha_k)$ and $w$ to be the parameters of $\JS^{\alpha',w}$.
Then $\JS^{(\alpha,\alpha'),(\frac{w}{2},\frac{w}{2})}$ is a symmetric skew-vector JSD:

\begin{eqnarray}
\JS^{(\alpha,\alpha'),(\frac{w}{2},\frac{w}{2})}(p:q) &:=& \frac{1}{2} \JS^{\alpha,w}(p:q)+\frac{1}{2} \JS^{\alpha',w}(p:q),\\
&=& \frac{1}{2} \JS^{\alpha,w}(p:q)+\frac{1}{2} \JS^{\alpha,w}(q:p) = \JS^{(\alpha,\alpha'),(\frac{w}{2},\frac{w}{2})}(q:p).
\end{eqnarray}

\end{Remark}

%
%

As a side note, let us notice that our notation $(pq)_\alpha$ allows one to compactly write the following property:
\begin{Property}
We have $q=(qq)_\lambda$ for any $\lambda\in [0,1]$, and
 $((p_1p_2)_\lambda(q_1q_2)_\lambda)_\alpha=((p_1q_1)_\alpha(p_2q_2)_\alpha)_\lambda$ for any $\alpha,\lambda\in [0,1]$.
\end{Property}

\begin{proof}
Clearly, $q=(1-\lambda)q+\lambda q=:((qq)_\lambda)$ for any $\lambda\in [0,1]$.
Now, we have
\begin{eqnarray}
((p_1p_2)_\lambda(q_1q_2)_\lambda)_\alpha &=&  (1-\alpha)(p_1p_2)_\lambda + \alpha (q_1q_2)_\lambda,\\
&=& (1-\alpha)((1-\lambda)p_1+\lambda p_2)+\alpha ((1-\lambda)q_1+\lambda q_2),\\
&=& (1-\lambda)((1-\alpha)p_1+\alpha q_1)+\lambda ((1-\alpha)p_2+\alpha q_2),\\
&=&  (1-\lambda) (p_1q_1)_\alpha + \lambda (p_2q_2)_\alpha,\\
&=& ((p_1q_1)_\alpha(p_2q_2)_\alpha)_\lambda.
\end{eqnarray}
\end{proof}

\subsection{Building symmetric families of vector-skewed Jensen-Shannon divergences\label{sec:symJS}}
We can build infinitely many vector-skew Jensen-Shannon divergences.
For example, consider $\alpha=\left(0,1,\frac{1}{3}\right)$ and $w=\left(\frac{1}{3},\frac{1}{3},\frac{1}{3}\right)$.
Then $\bar\alpha=\frac{1}{3}+\frac{1}{9}=\frac{4}{9}$, and
\begin{equation}
\JS^{\alpha,w}(p:q) =  h\left((pq)_{\frac{4}{9}}\right) - \frac{h(p)+h(q)+h\left((pq)_{\frac{1}{3}}\right)}{3} \not=\JS^{\alpha,w}(q:p). 
\end{equation}

Interestingly, we can also build infinitely many families of {\em symmetric}  vector-skew Jensen-Shannon divergences.
For example, consider these two examples that illustrate the construction process:

\begin{itemize}
	\item Consider $k=2$.
	Let $(w,1-w)$ denote the weight vector, and $\alpha=(\alpha_1,\alpha_2)$ the skewing vector.
	We have $\bar\alpha=w\alpha_1+(1-w)\alpha_2=\alpha_2+w(\alpha_1-\alpha_2)$.
	The vector-skew JSD is symmetric iff. $w=1-w=\frac{1}{2}$  (with $\bar\alpha=\frac{\alpha_1+\alpha_2}{2}$),
	and
	$\alpha_2=1-\alpha_1$.
	In that case, we have $\bar\alpha=\frac{1}{2}$, and we obtain the following family of symmetric Jensen-Shannon divergences:
	\begin{eqnarray}
	\JS^{(\alpha,1-\alpha),(\frac{1}{2},\frac{1}{2})}(p,q) &=& h\left(  (pq)_{\frac{1}{2}} \right) - \frac{ h((pq)_\alpha) + h((pq)_{1-\alpha}) }{2} ,\\
	&=& h\left(  (pq)_{\frac{1}{2}} \right) - \frac{ h((pq)_\alpha) + h((qp)_{\alpha}) }{2}  
	= \JS^{(\alpha,1-\alpha),(\frac{1}{2},\frac{1}{2})}(q,p).
	\end{eqnarray}
	
	
	\item Consider $k=4$,   weight vector $w=\left(\frac{1}{3},\frac{1}{3},\frac{1}{6},\frac{1}{6}\right)$, 
	and   skewing vector $\alpha=(\alpha_1,1-\alpha_1,\alpha_2,1-\alpha_2)$ for $\alpha_1,\alpha_2\in (0,1)$.
	Then $\bar\alpha=\frac{1}{2}$, and we get the following family of symmetric vector-skew JSDs:
	\begin{eqnarray}
	\JS^{(\alpha_1,\alpha_2)}(p,q) &=& h\left(  (pq)_{\frac{1}{2}} \right) - \frac{ 2 h((pq)_{\alpha_1}) + 2 h((pq)_{1-{\alpha_1}}) +  h((pq)_{\alpha_2}) + h((pq)_{1-\alpha_2})  }{6} ,\\
	&=& h\left(  (pq)_{\frac{1}{2}} \right) - \frac{ 2 h((pq)_{\alpha_1}) + 2 h((qp)_{{\alpha_1}}) +  h((pq)_{\alpha_2}) + h((qp)_{\alpha_2})  }{6},\\
	&=& \JS^{(\alpha_1,\alpha_2)}(q,p).
	\end{eqnarray}

	\item We can carry on similarly the construction of such symmetric JSDs by increasing the dimensionality of the skewing vector.
	
\end{itemize}

In fact, we can define
\begin{equation}
\JS_s^{\alpha,w}(p,q) \eqdef h\left((pq)_{\frac{1}{2}}\right) - \sum_{i=1}^k w_i\frac{h((pq)_{\alpha_i})+h((pq)_{1-\alpha_i})}{2} 
= \sum_{i=1}^k w_i \JS_s^{\alpha_i}(p,q), 
\end{equation}
with
\begin{equation}
\JS_s^{\alpha}(p,q) \eqdef h\left((pq)_{\frac{1}{2}}\right) - \frac{h((pq)_{\alpha})+h((pq)_{1-\alpha})}{2}.  
\end{equation}

\section{Jensen-Shannon centroids on mixture families}\label{sec:JScentroid}

\subsection{Mixture families and Jensen-Shannon divergences}

Consider a mixture family in information geometry~\cite{IG-2016}.
That is, let us give a  prescribed set of $D+1$ linearly independent probability densities $p_0(x),\ldots, p_{D}(x)$ defined on the sample space $\calX$.
A {\em mixture family} $\calM$ of order $D$ consists of all {\em strictly} convex combinations 
of these component densities:
\begin{equation}
\calM \eqdef  \left\{ m(x;\theta) \eqdef \sum_{i=1}^{D} \theta^i p_i(x)+ \left(1-\sum_{i=1}^{D} \theta^i\right)p_0(x) \st \theta^i>0,\ \sum_{i=1}^{D} \theta^i<1 \right\}.
\end{equation}
For example, the family of categorical distributions (sometimes called ``multinouilli'' distributions) is a mixture family~\cite{IG-2016} which can also be interpreted as an exponential family.

The KL divergence between two densities of a mixture family $\calM$ amounts to a Bregman divergence for the Shannon negentropy generator 
$F(\theta)=-h(m_\theta)$ (see~\cite{wmixture-2018}):
\begin{eqnarray}
\KL(m_{\theta_1}:m_{\theta_2}) = B_{F}(\theta_1:\theta_2) = B_{-h(m_\theta)}(\theta_1:\theta_2).
\end{eqnarray}

On a mixture manifold $\calM$, the mixture density $(1-\alpha)m_{\theta_1}+\alpha m_{\theta_2}$ of two mixtures $m_{\theta_1}$ and $m_{\theta_2}$ of $\calM$
 also belongs to $\calM$:
\begin{equation}
(1-\alpha)m_{\theta_1}+\alpha m_{\theta_2}=m_{(\theta_1\theta_2)_\alpha}\in\calM,
\end{equation}
where we extend the notation $(\theta_1\theta_2)_\alpha\eqdef (1-\alpha)\theta_1+\alpha\theta_2$ to vectors $\theta_1$ and $\theta_2$:
$(\theta_1\theta_2)_\alpha^i=(\theta_1^i\theta_2^i)_\alpha$.

Thus the vector-skew JSD amounts to a vector-skew Jensen diversity for the Shannon negentropy convex function 
$F(\theta)=-h(m_\theta)$:

\begin{eqnarray}
\JS^{\alpha,w}(m_{\theta_1}:m_{\theta_2}) &=& \sum_{i=1}^k w_i \KL\left((m_{\theta_1}m_{\theta_2})_{\alpha_i}:(m_{\theta_1}m_{\theta_2})_{\bar\alpha}\right),\\
&=& \sum_{i=1}^k w_i \KL\left(m_{(\theta_1\theta_2)_{\alpha_i}}:m_{(\theta_1\theta_2)_{\bar\alpha}}\right),\\
&=&\sum_{i=1}^k w_i  B_F\left((\theta_1\theta_2)_{\alpha_i}:(\theta_1\theta_2)_{\bar\alpha}\right) =\JB_F^{\alpha,\bar\alpha,w}(\theta_1:\theta_2),\\
&=& \sum_{i=1}^k w_i F\left((\theta_1\theta_2)_{\alpha_i}\right) - F\left((\theta_1\theta_2)_{\bar\alpha}\right),\label{eq:jj}\\
&=&  h(m_{(\theta_1\theta_2)_{\bar\alpha}}) - \sum_{i=1}^k w_i h\left(m_{(\theta_1\theta_2)_{\alpha_i}}\right).
\end{eqnarray}

\subsection{Jensen-Shannon centroids}

Given a set of $n$ mixture densities $m_{\theta_1}, \ldots, m_{\theta_n}$ of $\calM$,
we seek to calculate the skew-vector Jensen-Shannon centroid (or barycenter) by minimizing the following objective function (or loss function):
\begin{equation}\label{eq:bary}
L(\theta) \eqdef \sum_{j=1}^n \omega_j \JS^{\alpha,w}(m_{\theta_k}:m_{\theta}),
\end{equation}
where $\omega\in\Delta_n$ is the weight vector of densities (uniform weight for the centroid and non-uniform weight for a barycenter).
This definition of the Jensen-Shannon centroid is a generalization of the {\em Fr\'echet mean}\footnote{The Fr\'echet mean may not be unique as it is the case on the sphere for two antipodal points for which their Fr\'echet means with respect to the geodesic metric distance form a great circle.}~\cite{Frechet-1948} to non-metric spaces.
Since the divergence $\JS^{\alpha,w}$ is strictly separable convex, it follows that the Jensen-Shannon-type centroids are unique when they exist.

Plugging Eq.~\ref{eq:jj} into Eq.~\ref{eq:bary}, we get that the calculation of the Jensen-Shannon centroid amounts to minimize:
\begin{equation}\label{eq:cdc}
L(\theta) = \sum_{j=1}^n \omega_j \left(\sum_{i=1}^k w_i F((\theta_j\theta)_{\alpha_i}) - F\left((\theta_j\theta)_{\bar\alpha}\right)\right).
\end{equation}

This optimization is a {\em Difference of Convex} (DC) programming optimization for which we can use the ConCave-Convex procedure~\cite{Yuille-2002,BR-2011} (CCCP).
Indeed, let us define the following two convex functions:
\begin{eqnarray}
A(\theta) &=& \sum_{j=1}^n \sum_{i=1}^k \omega_j w_i F((\theta_j\theta)_{\alpha_i}),\\
B(\theta) &=& \sum_{j=1}^n \omega_j  F\left((\theta_j\theta)_{\bar\alpha}\right).
\end{eqnarray}

Both functions $A(\theta)$ and $B(\theta)$ are convex since $F$ is convex.
Then the minimization problem of Eq.~\ref{eq:cdc} to solve can be rewritten as:
\begin{equation}
\min_\theta A(\theta)-B(\theta).
\end{equation}
This is a DC programming optimization problem which can be solved iteratively by initializing $\theta$ to an arbitrary value $\theta^{(0)}$ (say, the centroid of the $\theta_i$'s),
 and then by updating the parameter at step $t$ using the CCCP~\cite{Yuille-2002} as follows:
\begin{equation}
\theta^{(t+1)} = (\nabla B)^{-1}(\nabla A(\theta^{(t)})).
\end{equation}
Compared to a gradient descent local optimization, there is no required step size (also called ``learning'' rate) in CCCP.

We have $\nabla A(\theta)=\sum_{j=1}^n \sum_{i=1}^k \omega_j w_i  \alpha_i \nabla F((\theta_j\theta)_{\alpha_i})$
and
$\nabla B(\theta) =  \sum_{j=1}^n \omega_j  \bar\alpha \nabla F\left((\theta_j\theta)_{\bar\alpha}\right)$.

The CCCP converges to a local optimum $\theta^*$ where the support hyperplanes of the function graphs of $A$ and $B$ at $\theta^*$ are parallel to each other, as depicted in Figure~\ref{fig:cccp}.
The set of stationary points are $\{\theta \st \nabla A(\theta)=\nabla B(\theta)\}$.
In practice, the delicate step is to invert $\nabla B$.
Next, we show how to implement this algorithm for the Jensen-Shannon centroid of a set of categorical distributions (i.e., normalized histograms with all non-empty bins).

\begin{figure}%
\centering
\includegraphics[width=0.6\columnwidth]{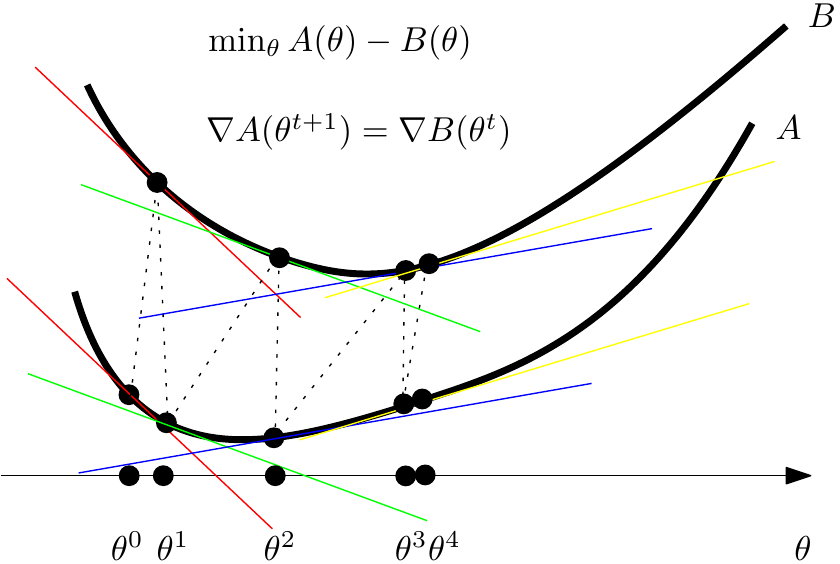}%
\caption{The Convex ConCave Procedure iteratively updates the parameter $\theta$ by aligning the support hyperplanes at $\theta$.
In the limit case of convergence to $\theta^*$, the support hyperplanes at $\theta^*$ are parallel to each other. CCCP finds a local minimum.}%
\label{fig:cccp}%
\end{figure}

\subsubsection{Jensen-Shannon centroids of categorical distributions}

To illustrate the method, let us consider the mixture family of categorical distributions~\cite{IG-2016}: 
\begin{equation}
\calM =\left\{ m_\theta(x) = \sum_{i=1}^D \theta_i \delta(x-x_i) + \left(1- \sum_{i=1}^D \theta_i\right) \delta(x-x_0) \right\},
\end{equation}
where $\delta(x)$ is the Dirac distribution (i.e., $\delta(x)=1$ for $x=0$ and $\delta(x)=0$ for $x\not=0$).
The Shannon negentropy is
\begin{equation}
F(\theta)=-h(m_\theta)= \sum_{i=1}^D \theta_i\log\theta_i + \left(1- \sum_{i=1}^D \theta_i\right) \log \left(1- \sum_{i=1}^D \theta_i\right).
\end{equation}

We have the partial derivatives 
\begin{equation}\label{eq:nablaF}
\nabla F(\theta)=\left[\frac{\partial}{\partial \theta_i}\right]_i, \quad 
\frac{\partial}{\partial \theta_i} F(\theta)=\log\left( \frac{\theta_i}{1-\sum_{j=1}^D \theta_j}\right).
\end{equation}
Inverting the gradient $\nabla F$ requires to solve the equation $\nabla F(\theta)=\eta$ so that we get $\theta=(\nabla F)^{-1}(\eta)$.
We find that
\begin{equation}\label{eq:nablaG}
\nabla F^*(\eta)=(\nabla F)^{-1}(\eta)=\frac{1}{1+\sum_{j=1}^D \exp(\eta_j)}[\exp(\eta_i)]_i , \quad \theta_i = (\nabla F^{-1}(\eta))_i = \frac{\exp(\eta_i)}{1+\sum_{j=1}^D \exp(\eta_j)},\ \forall i\in [D].
\end{equation}

We have $\JS(p_1,p_2)=J_F(\theta_1,\theta_2)$ for $p_1=m_{\theta_1}$ and $p_2=m_{\theta_2}$ where 
\begin{equation}
J_F(\theta_1:\theta_2) = \frac{F(\theta_1)+F(\theta_2)}{2} - F\left(\frac{\theta_1+\theta_2}{2}\right),
\end{equation}
is the Jensen divergence~\cite{BR-2011}.
Thus to compute the Jensen-Shannon centroid of a set of $n$ densities $p_1, \ldots, p_n$ of a mixture family (with $p_i=m_{\theta_i}$), we need to 
 solve the following optimization problem for a density $p=m_\theta$:

\begin{eqnarray*}
&& \min_p \sum_i \JS(p_i,p),\\
&&\min_\theta  \sum_i J_F(\theta_i,\theta),\\
&&\min_\theta \sum_i \frac{F(\theta_i)+F(\theta)}{2} - F\left(\frac{\theta_i+\theta}{2}\right),\\
&& \equiv \min_\theta \frac{1}{2}F(\theta) - \frac{1}{n}\sum_i F\left(\frac{\theta_i+\theta}{2}\right) := E(\theta).
\end{eqnarray*}

The CCCP algorithm for the Jensen-Shannon centroid proceeds by initializing $\theta^{(0)}=\frac{1}{n}\sum_i \theta_i$ (center of mass of the natural parameters), 
and iteratively update as follows:
\begin{equation}
\theta^{(t+1)} = (\nabla F)^{-1} \left(  \frac{1}{n}\sum_i \nabla F\left(\frac{\theta_i+\theta^{(t)}}{2}\right) \right).
\end{equation} 
We iterate until the absolute difference $|E(\theta^{(t)})-E(\theta^{(t+1)})|$ between two successive $\theta^{(t)}$ and $\theta^{(t+1)}$ goes below a prescribed threshold value. 
The convergence of the CCCP algorithm is linear~\cite{CCCP-2009} to a local minimum that is a fixed point of the equation
\begin{equation}
\theta  = M_H\left(\frac{\theta_1+\theta}{2},\ldots,\frac{\theta_n+\theta}{2}  \right),
\end{equation} 
where $M_H(v_1,\ldots,v_n)\eqdef H^{-1}(\sum_{i=1}^n H(v_i))$ is a vector generalization of the formula of the 
quasi-arithmetic means~\cite{SymB-2009,BR-2011} obtained for the generator $H=\nabla F$.
Algorithm~\ref{algo:jscentroid} summarizes the method for approximating the Jensen-Shannon centroid of a given set of categorical distributions (given a prescribed number of iterations).
In the pseudo-code, we used the notation $\leftsup{(t+1)}\theta$ instead of $\theta^{(t+1)}$ in order to highlight the conversion procedures of the natural parameters to/from the mixture weight parameters by using superscript notations for coordinates.


\begin{algorithm}
\DontPrintSemicolon 
\KwIn{A set $\{p_i=(p_i^1,\ldots, p_i^d)\}_{i\in [n]}$ of $n$ categorical distributions belonging to the $(d-1)$-dimensional probability simplex $\Delta_{d-1}$}
\KwIn{$T$: the number of CCCP iterations}
\KwOut{An approximation $\leftsup{(T)}\bar{p}$ of the Jensen-Shannon centroid $\bar{p}$}
\tcc{Convert the categorical distributions to their natural parameters by dropping the last coordinate}
$\theta_i^j=p_i^j$ for $j\in\{1,\ldots, d-1\}$\;
\tcc{Initialize the JS centroid} 
$t\leftarrow 0$\;
$\leftsup{(0)}\bar\theta=\frac{1}{n}\sum_{i=1} \theta_i$\;
\tcc{Convert the initial natural parameter of the JS centroid to a categorical distribution}
$\leftsup{(0)}\bar{p}^j=\leftsup{(0)}\bar\theta^j$ for $j\in\{1,\ldots, d-1\}$\; 
$\leftsup{(0)}\bar{p}^d=1-\sum_{i=1}^d \leftsup{(0)}\bar{p}^j$\; 

\tcc{Perform the ConCave-Convex Procedure (CCCP)} 
 \While{$t\leq T$} {
    \tcc{Use Eq.~\ref{eq:nablaF} for $\nabla F$ and Eq.~\ref{eq:nablaG} for $\nabla F^{*}=(\nabla F)^{-1}$}
$\leftsup{(t+1)}\theta = (\nabla F)^{-1} \left(  \frac{1}{n}\sum_i \nabla F\left(\frac{\theta_i+\leftsup{(t)}\theta}{2}\right) \right)$\;
	$t\leftarrow t+1$\;
  }
$\leftsup{(T)}\bar{p}^j=\leftsup{(T)}\bar\theta^j$ for $j\in\{1,\ldots, d-1\}$\; 
$\leftsup{(T)}\bar{p}^d=1-\sum_{i=1}^d \leftsup{(T)}\bar{p}^j$\;
\Return{$\leftsup{(T)}\bar{p}$}\;
\caption{The CCCP algorithm for computing the Jensen-Shannon centroid of a set of categorical distributions.}
\label{algo:jscentroid}
\end{algorithm}

Figure~\ref{fig:exp} displays the results of the calculations of the Jeffreys centroid~\cite{JeffreysCentroid-2013} and the Jensen-Shannon centroid for two normalized histograms obtained from grey-valued images of {\tt Lena} and {\tt Barbara}.
Figure~\ref{fig:exp2} shows the Jeffreys centroid and the Jensen-Shannon centroid for the {\tt Barbara} image and its negative.
Figure~\ref{fig:exp3} demonstrates that the  Jensen-Shannon centroid is well-defined even if the input histograms do not have coinciding supports. Notice that on the parts of the support where only one distribution is defined, the JS centroid is a scaled copy of that defined distribution.

\begin{figure}%
\centering
\includegraphics[width=0.25\columnwidth]{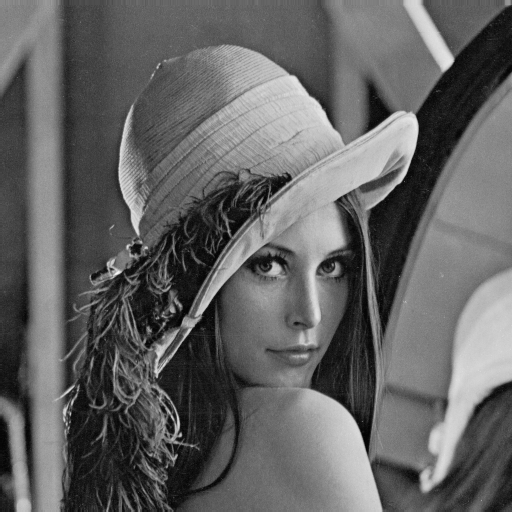}%
\hspace{1cm}
\includegraphics[width=0.25\columnwidth]{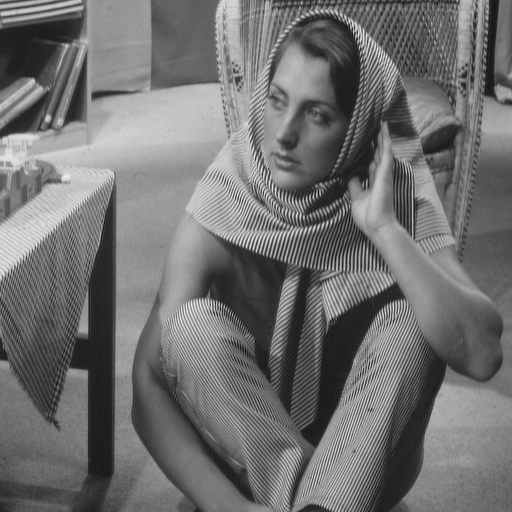}%
\\

\includegraphics[width=0.9\columnwidth]{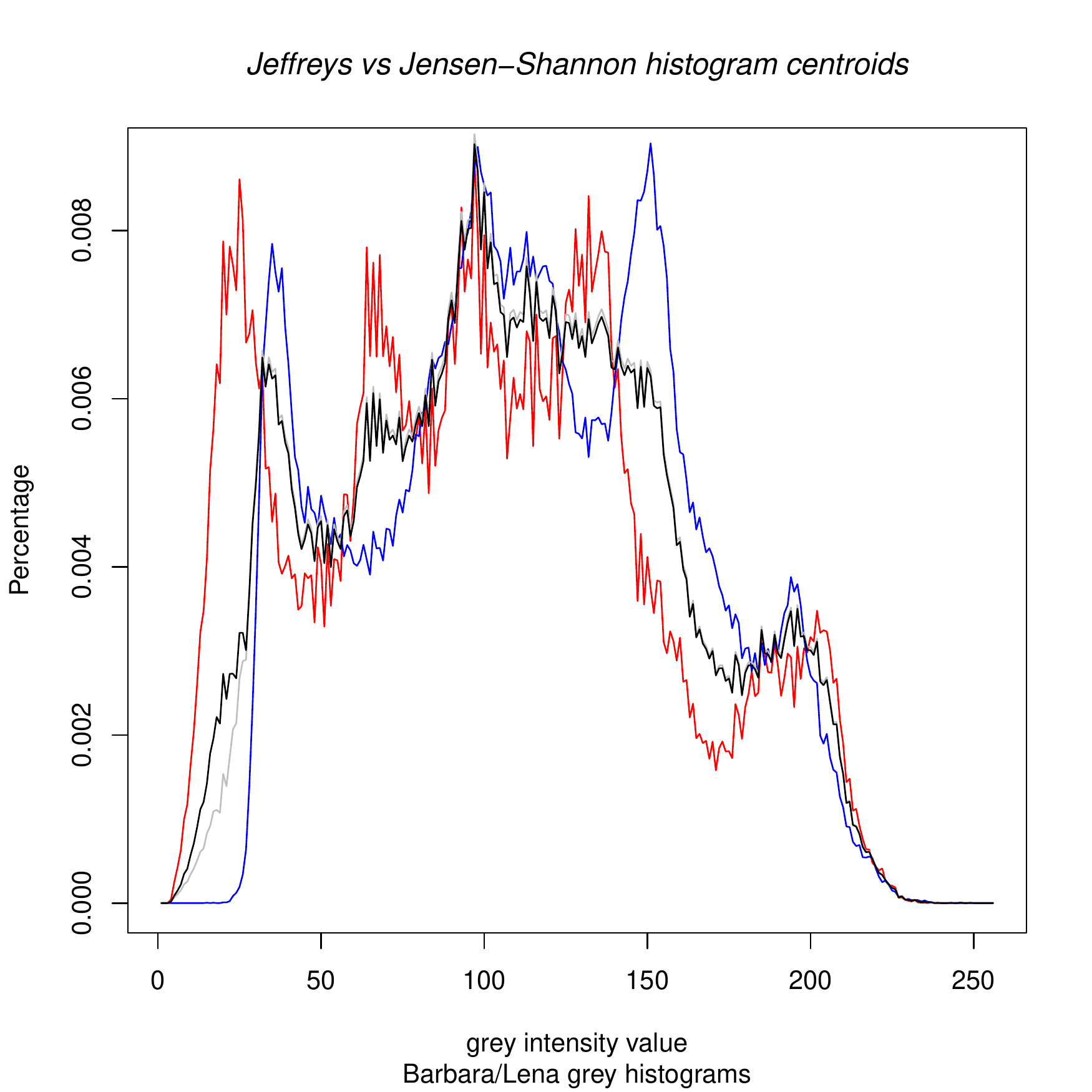}%

\caption{The Jeffreys centroid (grey histogram) and the Jensen-Shannon centroid (black histogram) for two grey normalized histograms of the {\tt Lena} image (red histogram) and {\tt Barbara} image (blue histogram). }%
\label{fig:exp}%
\end{figure}

\begin{figure}%
\centering
\includegraphics[width=0.25\columnwidth]{Barbara-gray.png}%
\hspace{1cm}
\includegraphics[width=0.25\columnwidth]{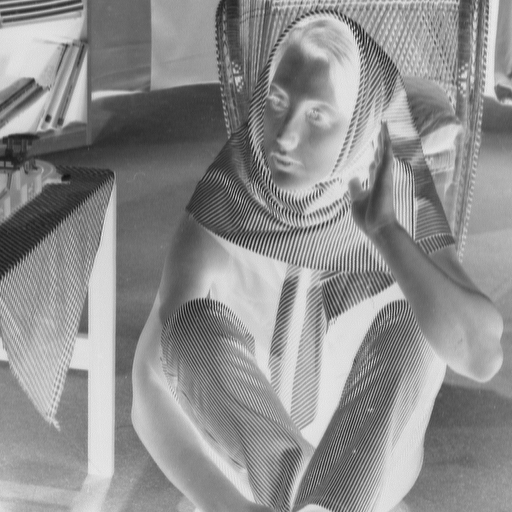}%
\\

\includegraphics[width=0.9\columnwidth]{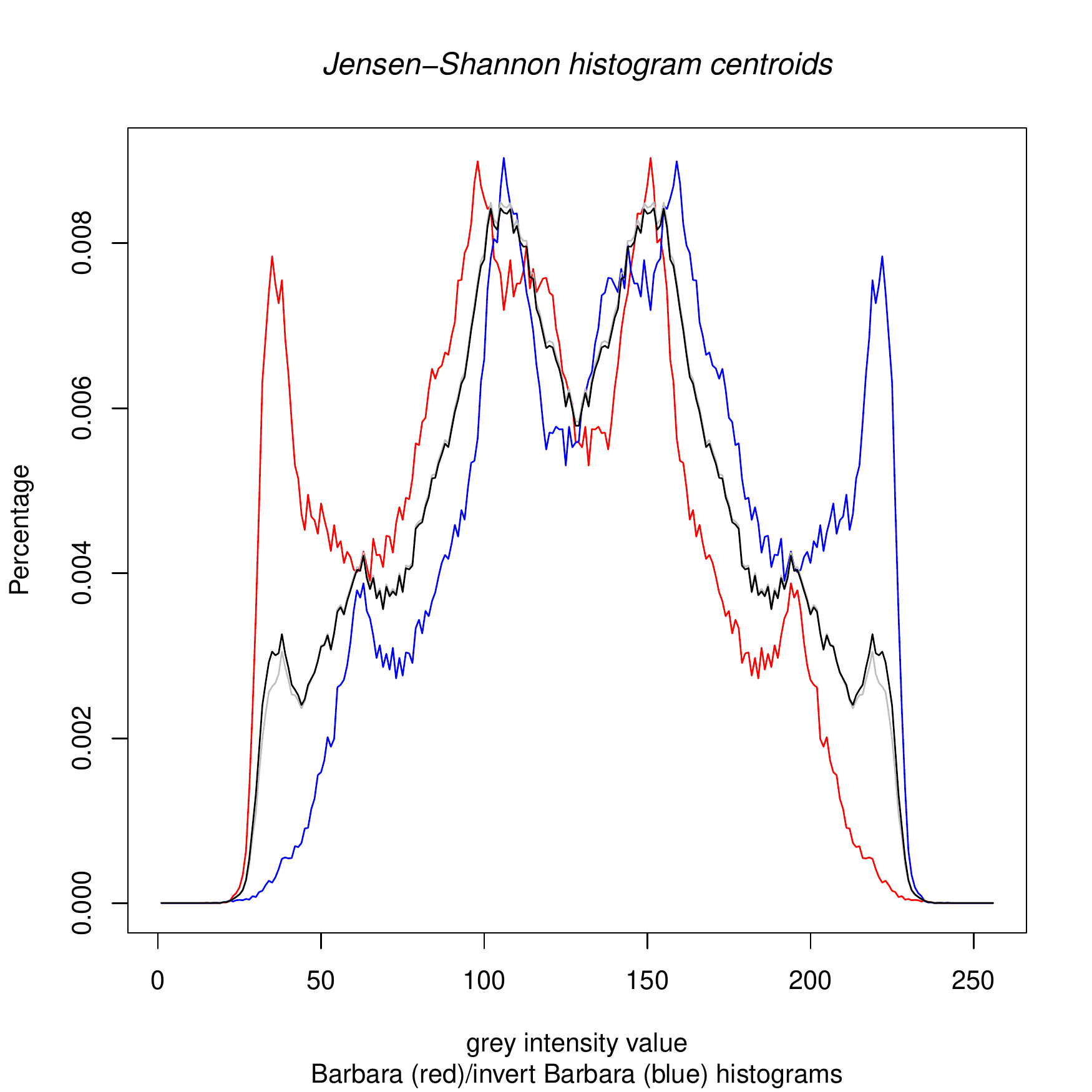}%

\caption{The Jeffreys centroid (grey histogram) and the Jensen-Shannon centroid (black histogram) for the grey normalized histogram of the {\tt Barbara} image (red histogram) and its negative image (blue histogram which corresponds to the reflection around the vertical axis $x=128$ of the red histogram). }%
\label{fig:exp2}%
\end{figure}

\begin{figure}%
\centering

\includegraphics[width=0.6\columnwidth]{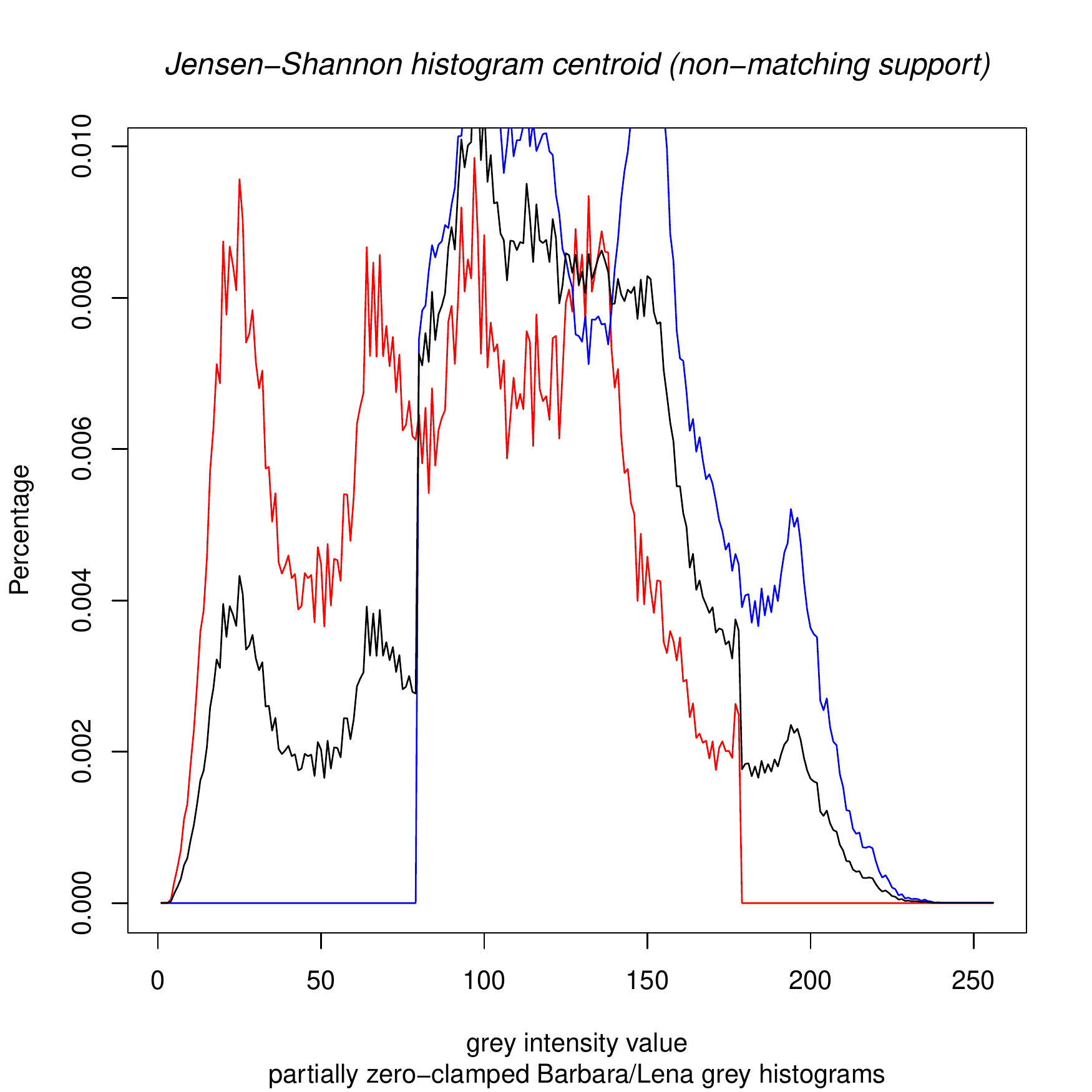}\\

\includegraphics[width=0.6\columnwidth]{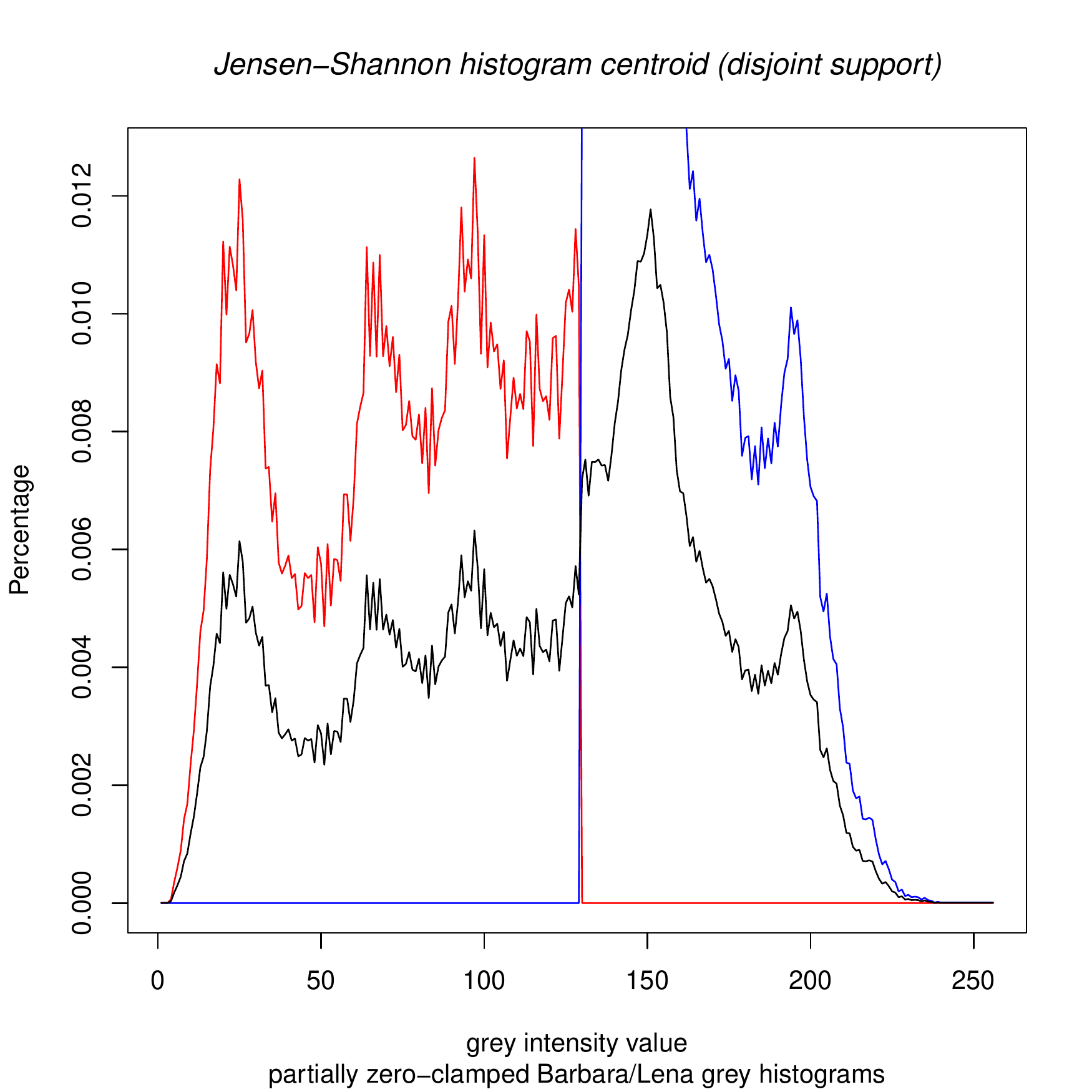}

\caption{Jensen-Shannon centroid (black histogram) for the clamped grey normalized histogram of the {\tt Lena} image (red histograms) and the clamped gray normalized histogram of {\tt Barbara} image (blue histograms). Notice that on the part of the sample space where only one distribution is non-zero, the JS centroid scales that histogram portion.}%
\label{fig:exp3}%
\end{figure}

\subsubsection{Special cases}

Let us now consider two special cases:

\begin{itemize}
\item For the special case of $D=1$, the categorical family is the Bernoulli family, and we have
 $F(\theta)=\theta\log\theta+(1-\theta)\log(1-\theta)$ (binary negentropy), 
$F'(\theta)=\log \frac{\theta}{1-\theta}$ (and $F''(\theta)=\frac{1}{\theta(1-\theta)}>0$) and $(F')^{-1}(\eta)=\frac{e^{\eta}}{1+e^{\eta}}$.
The CCCP update rule to compute the binary Jensen-Shannon centroid becomes

\begin{equation}
\theta^{(t+1)} = (F')^{-1}\left(\sum_i w_i F'\left(\frac{\theta^{(t)}+\theta_i}{2}\right)\right).
\end{equation}

\item Since the skew-vector Jensen-Shannon divergence formula holds for positive densities:
\begin{eqnarray}
{\JS^+}^{\alpha,w}(\tp:\tq) &=& \sum_{i=1}^k w_i \KL^+((\tp\tq)_{\alpha_i}:((\tp\tq)_{\bar\alpha}),\\
 &=&  \sum_{i=1}^k w_i \left( \KL((\tp\tq)_{\alpha_i}:((\tp\tq)_{\bar\alpha}) +  \int (\tp\tq)_{\bar\alpha} \dmu - 
\underbrace{\sum_{i=1}^k w_i \int (\tp\tq)_{\alpha_i} \dmu}_{=\int (\tp\tq)_{\bar\alpha} \dmu} \right),\\
&=&  {\JS}^{\alpha,w}(\tp:\tq),
\end{eqnarray}
we can {\em relax} the computation of the Jensen-Shannon centroid by considering 1D separable minimization problems. 
We then normalize the positive JS centroids to get an approximation of the probability JS centroids.
This approach was also considered when dealing with the Jeffreys' centroid~\cite{JeffreysCentroid-2013}.
In 1D, we have $F(\theta)=\theta\log\theta-\theta$, $F'(\theta)=\log\theta$ and $(F')^{-1}(\eta)=e^\eta$.
\end{itemize}


In general, calculating the negentropy for a mixture family with continuous densities sharing the same support is not tractable because of the log-sum term of the differential entropy.
However, the following remark emphasizes an extension of the mixture family of categorical distributions:

\subsubsection{Some remarks and properties}

\begin{Remark}
Consider a mixture family $m(\theta)=\sum_{i=1}^D \theta_i p_i(x)+ \left(1-\sum_{i=1}^D \theta_i\right) p_0(x)$ (for a parameter $\theta$ belonging to the $D$-dimensional standard simplex) of $D+1$ linearly independent probability densities $p_0(x),\ldots, p_D(x)$ defined respectively on the supports $\calX_0,\calX_1, \ldots, \calX_D$. Let $\theta_0 \eqdef 1-\sum_{i=1}^D \theta_i$.
Assume that the support $\calX_i$'s of the $p_i$'s are {\em mutually non-intersecting} ($\calX_i\cap\calX_j=\emptyset$ for all $i\not=j$) so that
$m_\theta(x)=\theta_i p_i(x)$ for all $x\in\calX_i$, and let $\calX=\cup_i\calX_i$.
Consider Shannon negative entropy $F(\theta)=-h(m_\theta)$ as a strictly convex function.
Then we have 
\begin{eqnarray}
F(\theta)&=&  -h(m_\theta) =\int_{\calX} m_\theta(x)\log m_\theta(x),\\
&=&\sum_{i=0}^D \theta_i \int_{\calX_i} p_i(x) \log(\theta_i p_i(x))\dmu(x),\\
& =& \sum_{i=0}^D \theta_i\log\theta_i - \sum_{i=0}^D\theta_i h(p_i).
\end{eqnarray} 
Note that the term $\sum_i \theta_i h(p_i)$ is affine in $\theta$, and Bregman divergences are defined up to affine terms 
so that the Bregman generator $F$ is equivalent to the Bregman generator of the family categorical distributions.
This example generalizes the ordinary mixture family of categorical distributions where the $p_i$'s are distinct Dirac distributions.
Note that when the support of the component distributions are not pairwise disjoint, the (neg)entropy may 
not be analytic~\cite{KLMixNotAnalytic-2016} (e.g., mixture of the convex weighting of two prescribed distinct Gaussian distributions).
This contrasts with the fact that the cumulant function of an exponential family is always real-analytic~\cite{EncyMath}.

Notice that we can truncate an exponential family~\cite{IG-2016} to get a (potentially non-regular~\cite{SinglyNormal-1994}) exponential family for defining the $p_i$'s on mutually non-intersecting domains $\calX_i$'s.
The entropy of a natural exponential family $\{e(x:\theta)=\exp(x^\top \theta-\psi(\theta))\ \st\ \theta\in\Theta\}$ with cumulant function $\psi(\theta)$ and natural parameter space $\Theta$ is $-\psi^*(\eta)$ where $\eta=\nabla\psi(\theta)$ and $\psi^*$is  the Legendre convex conjugate~\cite{EF-2010}: $h(e(x:\theta))=-\psi^*(\nabla\psi(\theta))$.
\end{Remark}


The entropy and cross-entropy between densities of a mixture family can be calculated in closed-form.

\begin{Property}
The entropy of a density belonging to a mixture family $\calM$ is $h(m_\theta)=-F(\theta)$, and the cross-entropy between two mixture densities $m_{\theta_1}$ and $m_{\theta_2}$ is
$\hcross(m_{\theta_1}:m_{\theta_2})=-F(\theta_2)-(\theta_1-\theta_2)^\top \eta_2=F^*(\eta_2)-\theta_1^\top\eta_2$.
\end{Property}

\begin{proof}
Let us write the KLD as the difference between the cross-entropy minus the entropy~\cite{CT-2012}:
\begin{eqnarray}
\KL(m_{\theta_1}:m_{\theta_2}) &=& \hcross(m_{\theta_1}:m_{\theta_2})-h(m_{\theta_1}),\\
&=& B_F(\theta_1:\theta_2),\\
&=&  F(\theta_1)-F(\theta_2)-(\theta_1-\theta_2)^\top \nabla F(\theta_2).
\end{eqnarray}

Following~\cite{EF-2010}, we deduce that
$h(m_{\theta})=-F(\theta)+c$ and $\hcross(m_{\theta_1}:m_{\theta_2})=-F(\theta_2)-(\theta_1-\theta_2)^\top \eta_2-c$ for a constant $c$.
Since by definition $F(\theta)=-h(m_{\theta})$, it follows that $c=0$ and that
$\hcross(m_{\theta_1}:m_{\theta_2})=-F(\theta_2)-(\theta_1-\theta_2)^\top \eta_2=F^*(\eta_2)-\theta_1^\top\eta_2$ where $\eta=\nabla F(\theta)$.

\end{proof}


%
%

Thus we can compute numerically the Jensen-Shannon centroids (or barycenters) of a set of densities belonging to a mixture family.
This includes the case of categorical distributions and the case of Gaussian Mixture Models (GMMs) with prescribed Gaussian components~\cite{wmixture-2018} (although in this case the negentropy need to be stochastically approximated using Monte Carlo techniques~\cite{MCIG-2018}).
When the densities do not belong to a mixture family (say, the Gaussian family which is an exponential family~\cite{IG-2016}), we face the problem that the mixture of two densities does not belong to the family anymore. 
One way to tackle this problem is to project the mixture onto the Gaussian family.
This corresponds to a $m$-projection (mixture projection) which can be interpreted as a Maximum Entropy projection of the mixture~\cite{KDE-2013,IG-2016}).

Notice that we can perform fast $k$-means clustering without centroid calculations by generalizing the $k$-means++ probabilistic initialization~\cite{kmpp-2007,clusteringalphadiv-2014} to an arbitrary divergence as detailed in~\cite{tJ-2015}.
%
%
%
%
%
%
%
%
%
%
%
%
%
%
Finally, let us notice some decompositions of the Jensen-Shannon divergence and the skew Jensen divergences. 

\begin{Remark}
We have the following decomposition for the  Jensen-Shannon divergence:
\begin{eqnarray}
\JS(p_1,p_2) &=& h\left(\frac{p_1+p_2}{2}\right)-\frac{h(p_1)+h(p_2)}{2},\\
&=& h_\JS^\times(p_1:p_2)-h_\JS(p_2) \geq 0,
\end{eqnarray}
where 
\begin{equation}
h_\JS^\times(p_1:p_2)=h\left(\frac{p_1+p_2}{2}\right)-\frac{1}{2}h(p_1),
\end{equation}
 and $h_\JS(p_2)=h_\JS^\times(p_2:p_2)=h(p_2)-\frac{1}{2}h(p_2)=\frac{1}{2}h(p_2)$.
This decomposition bears some similarity with the KLD decomposition viewed as the cross-entropy minus the entropy (with the cross-entropy always upperbounding the entropy).

Similarly, the $\alpha$-skew Jensen divergence 
\begin{equation}
J_F^\alpha(\theta_1:\theta_2) \eqdef    (F(\theta_1)F(\theta_2))_\alpha - F\left((\theta_1\theta_2)_\alpha\right),\quad \alpha\in (0,1)
\end{equation}
can be decomposed as the sum of the information $I_F^\alpha(\theta_1)=(1-\alpha)F(\theta_1)$
minus the cross-information $C_F^\alpha(\theta_1:\theta_2)\eqdef  F\left((\theta_1\theta_2)_\alpha\right)-\alpha F(\theta_2)$: 
\begin{equation}
J_F^\alpha(\theta_1:\theta_2) = I_F^\alpha(\theta_1)-C_F^\alpha(\theta_1:\theta_2)\geq 0.
\end{equation}
Notice that the information $I_F^\alpha(\theta_1)$ is the self cross-information: $I_F^\alpha(\theta_1)=C_F^\alpha(\theta_1:\theta_1)=(1-\alpha)F(\theta_1)$.
Recall that the convex information is the negentropy where the entropy is concave.  
For the Jensen-Shannon divergence on the mixture family of categorical distributions, 
the convex generator $F(\theta)=-h(m_\theta)=\sum_{i=1}^D \theta^i\log\theta^i$ is the Shannon negentropy.
\end{Remark}

\section{Conclusion and discussion}\label{sec:concl}

The Jensen-Shannon divergence~\cite{Lin-1991} is a renown symmetrization of the Kullback-Leibler oriented divergence that enjoys the following three essential properties:
\begin{enumerate}
\item it is always bounded, 
\item it applies to densities with potentially different supports, and 
\item it extends to unnormalized densities while enjoying the same formula expression. 
\end{enumerate}

This JSD plays an important role in machine learning and in deep learning for studying Generative Adversarial Networks (GANs)~\cite{GAN-2014}.
Traditionally, the JSD has been skewed with a scalar parameter~\cite{skew-1999,Yamano-2019} $\alpha\in (0,1)$.
In practice, it has been demonstrated experimentally that skewing divergences may improve significantly the performance of some tasks (e.g.,~\cite{Lee-2001,DirectionalSim-2010}). 

In general, we can symmetrize the KLD $\KL(p:q)$ by taking an {\em abstract mean}\footnote{We require a symmetric mean $M(x,y)=M(y,x)$ with the in-betweeness property: $\min\{x,y\}\leq M(x,y)\leq \max\{x,y\}$} $M$ between the two orientations $\KL(p:q)$ and $\KL(q:p)$: 
\begin{equation}
\KL_M(p,q) \eqdef M(\KL(p:q),\KL(q:p)).
\end{equation}

We recover the Jeffreys divergence by taking twice the arithmetic mean (i.e., $J(p,q)=2 A(\KL(p:q),\KL(q:p))$ where $A(x,y)=\frac{x+y}{2}$),
and the resistor average divergence~\cite{ResistorKL-2001}   by taking the harmomic mean 
(i.e., $R_\KL(p,q)=H(\KL(p:q),\KL(q:p))=\frac{2\KL(p:q)\KL(q:p)}{\KL(p:q)+\KL(q:p)}$ where $H(x,y)=\frac{2}{\frac{1}{x}+\frac{1}{y}}$).
When we take the limit of H\"older power means, we get the following extremal symmetrizations of the KLD:
\begin{eqnarray}
\KL^{\mathrm{min}}(p:q) &=& \min\{\KL(p:q),\KL(q:p)\} = \KL^{\mathrm{min}}(q:p),\\
\KL^{\mathrm{max}}(p:q) &=& \max\{\KL(p:q),\KL(q:p)\} = \KL^{\mathrm{max}}(q:p).
\end{eqnarray}

In this work, we showed how to {\em vector-skew} the JSD while preserving the above three  properties.
These new families of {\em weighted vector-skew Jensen-Shannon divergences} may allow one to fine-tune the dissimilarity in applications by replacing the  skewing scalar parameter of the JSD by a vector parameter (informally, adding some ``knobs'' for tuning a divergence).
We then considered computing the Jensen-Shannon centroids of a set of densities belonging to a mixture family~\cite{IG-2016} by using the convex concave procedure~\cite{Yuille-2002}.


In general, we can vector-skew any arbitrary divergence $D$ by using two $k$-dimensional vectors $\alpha\in [0,1]^k$ and $\beta\in [0,1]^k$
 (with $\alpha\not=\beta$) by building a
weighted separable divergence as follows:
\begin{equation}
D^{\alpha,\beta,w}(p:q) \eqdef \sum_{i=1}^k w_i D\left((pq)_{\alpha_i}:(pq)_{\beta_i}\right) = D^{1_k-\alpha,1_k-\beta,w}(q:p),\quad \alpha\not=\beta.
\end{equation}
This bi-vector-skew divergence unifies the Jeffreys divergence with the Jensen-Shannon $\alpha$-skew divergence by setting the following parameters:
\begin{eqnarray}
\KL^{(0,1),(1,0),(1,1)}(p:q) &=& \KL(p:q)+\KL(q:p)=J(p,q),\\
\KL^{(0,\alpha),(1,1-\alpha),(\frac{1}{2},\frac{1}{2})}(p:q) &=& \frac{1}{2}\KL(p:(pq)_{\alpha}) + \frac{1}{2}\KL(q:(pq)_{\alpha}). 
\end{eqnarray} 


We have shown in this paper that interesting properties may occur when the skewing vector $\beta$ is purposely correlated to the skewing vector $\alpha$: 
Namely, for the bi-vector-skew Bregman divergences with $\beta=(\bar\alpha,\ldots,\bar\alpha)$ and $\bar\alpha=\sum_i w_i\alpha_i$, we obtain an equivalent Jensen diversity for the Jensen-Bregman divergence, and as a byproduct a vector-skew generalization of the Jensen-Shannon divergence.

\bibliographystyle{plain}
\bibliography{GenJSBib}

\begin{thebibliography}{10}

\bibitem{AliSilvey-1966}
Syed~Mumtaz Ali and Samuel~D Silvey.
\newblock A general class of coefficients of divergence of one distribution
  from another.
\newblock {\em Journal of the Royal Statistical Society: Series B
  (Methodological)}, 28(1):131--142, 1966.

\bibitem{IG-2016}
Shun-ichi Amari.
\newblock {\em Information geometry and its applications}.
\newblock Springer, 2016.

\bibitem{kmpp-2007}
David Arthur and Sergei Vassilvitskii.
\newblock {$k$-means++}: The advantages of careful seeding.
\newblock In {\em Proceedings of the eighteenth annual ACM-SIAM symposium on
  Discrete algorithms}, pages 1027--1035. Society for Industrial and Applied
  Mathematics, 2007.

\bibitem{Bregman-2005}
Arindam Banerjee, Srujana Merugu, Inderjit~S Dhillon, and Joydeep Ghosh.
\newblock Clustering with {B}regman divergences.
\newblock {\em Journal of machine learning research}, 6(Oct):1705--1749, 2005.

\bibitem{PM-2008}
Patrick Billingsley.
\newblock {\em Probability and measure}.
\newblock John Wiley \& Sons, 2008.

\bibitem{CT-2012}
Thomas~M. Cover and Joy~A. Thomas.
\newblock {\em Elements of information theory}.
\newblock John Wiley \& Sons, 2012.

\bibitem{Csiszar-1967}
Imre Csisz{\'a}r.
\newblock Information-type measures of difference of probability distributions
  and indirect observation.
\newblock {\em studia scientiarum Mathematicarum Hungarica}, 2:229--318, 1967.

\bibitem{SinglyNormal-1994}
Joan Del~Castillo.
\newblock The singly truncated normal distribution: a non-steep exponential
  family.
\newblock {\em Annals of the Institute of Statistical Mathematics},
  46(1):57--66, 1994.

\bibitem{JSMetric-2003}
Dominik~Maria Endres and Johannes~E Schindelin.
\newblock A new metric for probability distributions.
\newblock {\em IEEE Transactions on Information theory}, 2003.

\bibitem{Frechet-1948}
Maurice Fr{\'e}chet.
\newblock Les {\'e}l{\'e}ments al{\'e}atoires de nature quelconque dans un
  espace distanci{\'e}.
\newblock {\em Annales de l'institut Henri Poincar{\'e}}, 10(4):215--310, 1948.

\bibitem{GAN-2014}
Ian Goodfellow, Jean Pouget-Abadie, Mehdi Mirza, Bing Xu, David Warde-Farley,
  Sherjil Ozair, Aaron Courville, and Yoshua Bengio.
\newblock Generative adversarial nets.
\newblock In {\em Advances in neural information processing systems}, pages
  2672--2680, 2014.

\bibitem{CuriousCase-2014}
Jiantao Jiao, Thomas~A Courtade, Albert No, Kartik Venkat, and Tsachy Weissman.
\newblock Information measures: the curious case of the binary alphabet.
\newblock {\em IEEE Transactions on Information Theory}, 60(12):7616--7626,
  2014.

\bibitem{ResistorKL-2001}
Don Johnson and Sinan Sinanovic.
\newblock Symmetrizing the {K}ullback-{L}eibler distance.
\newblock {\em IEEE Transactions on Information Theory}, 2001.

\bibitem{PowerFDiv-1991}
P~Kafka, F~{\"O}sterreicher, and I~Vincze.
\newblock On powers of $f$-divergences defining a distance.
\newblock {\em Studia Sci. Math. Hungar}, 26(4):415--422, 1991.

\bibitem{DirectionalSim-2010}
Lili Kotlerman, Ido Dagan, Idan Szpektor, and Maayan Zhitomirsky-Geffet.
\newblock Directional distributional similarity for lexical inference.
\newblock {\em Natural Language Engineering}, 16(4):359--389, 2010.

\bibitem{CCCP-2009}
Gert~R. Lanckriet and Bharath~K. Sriperumbudur.
\newblock On the convergence of the concave-convex procedure.
\newblock In {\em Advances in neural information processing systems}, pages
  1759--1767, 2009.

\bibitem{skew-1999}
Lillian Lee.
\newblock Measures of distributional similarity.
\newblock In {\em Proceedings of the 37th Annual Meeting of the Association for
  Computational Linguistics on Computational Linguistics}, ACL '99, pages
  25--32, Stroudsburg, PA, USA, 1999. Association for Computational
  Linguistics.

\bibitem{Lee-2001}
Lillian Lee.
\newblock On the effectiveness of the skew divergence for statistical language
  analysis.
\newblock In {\em Proc. 8th International Workshop on Artificial Intelligence
  and Statistics}, 2001.

\bibitem{Lin-1991}
Jianhua Lin.
\newblock Divergence measures based on the {S}hannon entropy.
\newblock {\em IEEE Transactions on Information theory}, 37(1):145--151, 1991.

\bibitem{SymJensen-2010}
Frank Nielsen.
\newblock A family of statistical symmetric divergences based on {J}ensen's
  inequality.
\newblock {\em arXiv preprint arXiv:1009.4004}, 2010.

\bibitem{JeffreysCentroid-2013}
Frank Nielsen.
\newblock Jeffreys centroids: A closed-form expression for positive histograms
  and a guaranteed tight approximation for frequency histograms.
\newblock {\em IEEE Signal Processing Letters}, 20(7):657--660, 2013.

\bibitem{JS-2019}
Frank Nielsen.
\newblock On the {Jensen-Shannon} symmetrization of distances relying on
  abstract means.
\newblock {\em Entropy}, 21(5):485, 2019.

\bibitem{BR-2011}
Frank Nielsen and Sylvain Boltz.
\newblock The {B}urbea-{R}ao and {B}hattacharyya centroids.
\newblock {\em IEEE Transactions on Information Theory}, 57(8):5455--5466,
  2011.

\bibitem{MCIG-2018}
Frank Nielsen and Ga{\"e}tan Hadjeres.
\newblock Monte {C}arlo information geometry: The dually flat case.
\newblock {\em arXiv preprint arXiv:1803.07225}, 2018.

\bibitem{SymB-2009}
Frank Nielsen and Richard Nock.
\newblock Sided and symmetrized {B}regman centroids.
\newblock {\em IEEE transactions on Information Theory}, 55(6):2882--2904,
  2009.

\bibitem{EF-2010}
Frank Nielsen and Richard Nock.
\newblock Entropies and cross-entropies of exponential families.
\newblock In {\em 2010 IEEE International Conference on Image Processing},
  pages 3621--3624. IEEE, 2010.

\bibitem{JBD-2011}
Frank Nielsen and Richard Nock.
\newblock Skew {J}ensen-{B}regman {V}oronoi diagrams.
\newblock In {\em Transactions on Computational Science XIV}, pages 102--128.
  Springer, 2011.

\bibitem{tJ-2015}
Frank Nielsen and Richard Nock.
\newblock Total {J}ensen divergences: definition, properties and clustering.
\newblock In {\em 2015 IEEE International Conference on Acoustics, Speech and
  Signal Processing (ICASSP)}, pages 2016--2020. IEEE, 2015.

\bibitem{wmixture-2018}
Frank Nielsen and Richard Nock.
\newblock On the geometry of mixtures of prescribed distributions.
\newblock In {\em IEEE International Conference on Acoustics, Speech and Signal
  Processing (ICASSP)}, pages 2861--2865. IEEE, 2018.

\bibitem{clusteringalphadiv-2014}
Frank Nielsen, Richard Nock, and Shun-ichi Amari.
\newblock On clustering histograms with $k$-means by using mixed
  $\alpha$-divergences.
\newblock {\em Entropy}, 16(6):3273--3301, 2014.

\bibitem{KLMixNotAnalytic-2016}
Frank Nielsen and Ke~Sun.
\newblock Guaranteed bounds on information-theoretic measures of univariate
  mixtures using piecewise log-sum-exp inequalities.
\newblock {\em Entropy}, 18(12):442, 2016.

\bibitem{fdiv-2018}
Igal Sason.
\newblock On $f$-divergences: Integral representations, local behavior, and
  inequalities.
\newblock {\em Entropy}, 20(5):383, 2018.

\bibitem{KDE-2013}
Olivier Schwander and Frank Nielsen.
\newblock Learning mixtures by simplifying kernel density estimators.
\newblock In {\em Matrix Information Geometry}, pages 403--426. Springer, 2013.

\bibitem{EncyMath}
{Springer Verlag GmbH, European Mathematical Society}.
\newblock {Encyclopedia of Mathematics}.
\newblock Website.
\newblock URL: \url{https://www.encyclopediaofmath.org/}. Accessed on
  2019-12-19.

\bibitem{Yamano-2019}
Takuya Yamano.
\newblock Some bounds for skewed $\alpha$-{J}ensen-{S}hannon divergence.
\newblock {\em Results in Applied Mathematics}, 3:100064, 2019.

\bibitem{Yuille-2002}
Alan~L Yuille and Anand Rangarajan.
\newblock The concave-convex procedure (cccp).
\newblock In {\em Advances in neural information processing systems}, pages
  1033--1040, 2002.

\end{thebibliography}

\end{document}